\documentclass{easychair}

\usepackage{multirow}
\usepackage{color}

\newcommand{\OMIT}[1]{}

\newcommand{\PresQF}{\text{QF}(\mathbb{N})}
\newcommand{\PresEx}{\exists^*(\mathbb{N})}
\newcommand{\Logic}{\mathcal{L}}

\newcommand{\defn}[1]{\emph{#1}}
\newcommand{\N}{\mathbb{N}}

\usepackage{stmaryrd}
\usepackage{amssymb}


\newcommand\set[1]{\left\{#1\right\}}
\newcommand\ap[2]{{#1}\mathord{\brac{#2}}}
\newcommand\brac[1]{\left({#1}\right)}
\newcommand\tup[1]{\left({#1}\right)}
\newcommand\setcomp[2]{\left\{{#1}\ \left|\ {#2}\right.\right\}}
\newcommand\minsat[1]{\constc_{#1}}
\newcommand\semilin[2]{\ap{L}{#1; #2}}
\newcommand\maxiof[1]{\ap{\mathrm{max}}{#1}}
\newcommand\solsof[1]{\left\llbracket #1 \right\rrbracket}

\newcommand\isreg[1]{\mathrm{Region}_{#1}}
\newcommand\auxfplus[1]{\varz^{+}_{#1}}
\newcommand\auxfminus[1]{\varz^{-}_{#1}}

\newcommand\reggeq[1]{\geq_{#1}}
\newcommand\regequiv[2]{=^{#2}_{#1}}
\newcommand\eqreps[2]{C^{#2}_{#1}}
\newcommand\abs[1]{\ap{\mathrm{abs}}{#1}}
\newcommand\divsat[1]{\mathrm{Div}_{#1}}

\newcommand\pres{\phi}
\newcommand\varpres{\psi}
\newcommand\monpres{\Delta}
\newcommand\consta{a}
\newcommand\constb{b}
\newcommand\constc{c}
\newcommand\constd{d}
\newcommand\denom{k}
\newcommand\idxi{i}
\newcommand\idxj{j}
\newcommand\varx{x}
\newcommand\vary{y}
\newcommand\varz{z}
\newcommand\eqdiv[1]{\equiv_{#1}}
\newcommand\nats{\mathbb{N}}
\newcommand\ints{\mathbb{Z}}
\newcommand\numof{n}
\newcommand\varnumof{m}
\newcommand\vecx{\overline{\varx}}
\newcommand\vecz{\overline{\varz}}
\newcommand\vecc{\overline{\constc}}
\newcommand\vecd{\overline{\constd}}
\newcommand\propf{S}
\newcommand\veca{\overline{\consta}}
\newcommand\vecb{\overline{\constb}}
\newcommand\vecp{\overline{p}}
\newcommand\vecy{\overline{\vary}}
\newcommand\bound{B}
\newcommand\samediv{\mathrm{SameDiv}}
\newcommand\alldivshort{\mathrm{Divs}}
\newcommand\alldivlong{\mathrm{XDivs}}
\newcommand\divset{D}
\newcommand\badx{\chi}
\newcommand\linearf{f}
\newcommand\linearg{g}

\newcommand\xfs{F}
\newcommand\regmap{{\hat{r}}}
\newcommand\polyf{g}
\newcommand\baseset{A}
\newcommand\periodset{P}
\newcommand\auxeq{\mathrm{Aux}}
\newcommand\vecaux{\overline{\varz}_\xfs}
\newcommand\veceaux{\overline{e}_\xfs}

\newcommand\regub{\mathrm{UB}}
\newcommand\regeq{\mathrm{EQ}}
\newcommand\eqdef{\triangleq}

\newtheorem{definition}{Definition}[section]
\newtheorem{theorem}[definition]{Theorem}
\newtheorem{lemma}[definition]{Lemma}
\newtheorem{proposition}[definition]{Proposition}

\newcommand\shortlong[2]{#2}

\title{
    Monadic Decomposition in Integer Linear Arithmetic \\
    (Technical Report)
}
\author{
    Matthew Hague\inst{1}\thanks{Orcid ID: 0000-0003-4913-3800} \and
    Anthony W. Lin\inst{2}\thanks{Orcid ID: 0000-0003-4715-5096} \and
    Philipp R\"ummer\inst{3}\thanks{Orcid ID: 0000-0002-2733-7098} \and
    Zhilin Wu\inst{4}
}
\institute{
    Royal Holloway, University of London, United Kingdom \and
    TU Kaiserslautern, Germany \and
    Uppsala University, Sweden \and
    State Key Laboratory of Computer Science, \\
    Institute of Software, Chinese Academy of Sciences, China
}

\authorrunning{M. Hague, A. W. Lin, P. R\"ummer, Z. Wu}
\titlerunning{Monadic Decomposition in Integer Linear Arithmetic}

\begin{document}

\maketitle

\begin{abstract}
    Monadic decomposability is a notion of variable independence, which asks
    whether a given formula in a first-order theory is expressible as a Boolean
    combination of monadic predicates in the theory. Recently, Veanes et al.
    showed the usefulness of monadic decomposability in the context of SMT
    (i.e. the input formula is quantifier-free),
    and found various interesting applications including string analysis.
    However, checking monadic decomposability is undecidable in
    general. Decidability for certain theories is known (e.g. Presburger
    Arithmetic, Tarski's Real-Closed Field), but there are very few results
    regarding their computational complexity.
    In this paper, we study monadic decomposability of integer linear arithmetic
    in the setting of SMT.
    We show that this decision problem is coNP-complete and,
    when monadically decomposable, a formula admits a decomposition of
    exponential size in the worst case.
    We provide a new application of our results to
    string constraint solving with length constraints.
    We then extend our results to variadic decomposability,
    where predicates could admit multiple free variables (in contrast to monadic
    decomposability). Finally, we give an
    application to quantifier elimination in integer linear arithmetic where
    the variables in a block of quantifiers, if independent,
    could be eliminated with an exponential (instead of the standard
    doubly exponential) blow-up.

\end{abstract}


\section{Introduction}
\label{sec:intro}

A formula $\phi(\bar x)$ in some theory $\Logic$ is
\defn{monadically decomposable} if it is $\Logic$-equivalent to a
Boolean combination of monadic predicates in $\Logic$, i.e., to a
\defn{monadic decomposition} of $\phi$. Monadic decomposability
measures how tightly the free variables in $\phi$ are coupled. For
example, $x = y$ is not monadically decomposable in any (finitary)
logic over an infinite domain, but $x + y \geq 2$ can be decomposed,
in Presburger arithmetic over natural numbers, since it can be written
as $x \geq 2 \vee (x \geq 1 \wedge y \geq 1) \vee y \geq 2$.

Veanes \emph{et al.}  \cite{monadic-decomposition} initiated the study
of monadic decomposability in the setting of Satisfiability Modulo
Theories, wherein formulas are required to be quantifier-free. Monadic
decomposability has many applications, including symbolic transducers
\cite{symbolic-transducer-power} and string analysis
\cite{monadic-decomposition}. Although the problem was shown to be in
general undecidable, a generic semi-algorithm for outputting monadic
decompositions (if decomposable) was provided.  A termination check
could in fact be added if the input formula belongs to a theory for
which monadic decomposability is decidable, e.g., linear arithmetic,
Tarski's Real-Closed Field, and the theory of uninterpreted
functions.
Hitherto, not much is known about the computational complexity of
monadic decomposability problems for many first-order theories (in particular,
quantifier-free theories), and about practical algorithms.
This was an open problem raised by Veanes \emph{et al.} in \cite{monadic-decomposition}.

Monadic decomposability is intimately connected to the variable partition
problem, first studied by Libkin \cite{Libkin03} nearly 20 years ago. In
particular, a monadic
decomposition gives rise to a partition of the free variables $\bar x$ of a
formula $\phi(\bar x)$,
wherein each part consists of a single variable. More precisely,
take a partition $\Pi = \{Y_1,\ldots,Y_m\}$ of $\bar x$
into sets $Y_i$ of variables, with linearizations $\vecy_i$.
The formula $\phi(\bar x)$ is $\Pi$-decomposable (in some
theory $\Logic$)
if it is $\Logic$-equivalent to a boolean combination of formulas of the
form
$\Delta(\vecy_i)$.
As suggested in \cite{Libkin03}, such \emph{variadic decompositions} of
$\phi(\bar x)$ have potential applications in optimization of database query
processing and quantifier elimination. The author gave a general condition for the decidability
of variable independence in first-order theories. This result is unfortunately
not easily applicable in the SMT setting for at least two reasons: (i) the full
first-order theory might be undecidable (e.g. theory of uninterpreted
functions), and (ii) even for a first-order theory that admits decidable monadic
decompositions, the complexity of the algorithm obtained from \cite{Libkin03}
could be too prohibitive for the quantifier-free fragment. One example that
epitomizes (ii) is the problem of determining whether a given relation $R
\subseteq (\Sigma^*)^k$ over strings represented by a regular transducer
could be expressed as a boolean combination of monadic predicates. The result of
\cite{Libkin03} would give a double exponential-time algorithm for monadic
decomposability, whereas it was recently shown in \cite{BHLLN19}
to be solvable in polynomial-time (resp. polynomial-space) when the transducer
is given as a deterministic (resp. nondeterministic) machine.

\paragraph{Contributions.}
First, we determine the complexity of
deciding monadic
decomposability and outputting monadic decompositions (if they exist) for the theory
of integer linear arithmetic in the setting of SMT. Our result is summarized
in Theorem \ref{th:main}.


\begin{theorem}[Monadic Decomposability] \label{thm:mondec}
    Given a quantifer-free formula $\pres$ of Presburger
Arithmetic, it is coNP-complete to decide if $\pres$ is
monadically decomposable.
    This is efficiently reducible
to unsatisfiability of quantifier-free Presburger formulas.
    Moreover, if a decomposition exists, it can be constructed in
    exponential time.  
    \label{th:main}
\end{theorem}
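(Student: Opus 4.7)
For coNP-hardness I would reduce from unsatisfiability of QF Presburger. Given $\varpres(\vecz)$, define $\pres(\varx,\vary,\vecz) \eqdef (\varx=\vary) \wedge \varpres(\vecz)$. If $\varpres$ is unsatisfiable then $\pres \equiv \ffalse$ is trivially monadically decomposable. If $\varpres$ is satisfied by some $\vecc$, then instantiating $\vecz \eqdef \vecc$ in any hypothetical monadic decomposition of $\pres$ would collapse the $\vecz$-predicates to constants and leave a boolean combination of monadic predicates in $\varx,\vary$ equivalent to $\varx=\vary$, which is impossible over $\nats$. Hence $\pres$ is monadically decomposable iff $\varpres$ is unsatisfiable.

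\textbf{Upper bound and reduction.} I would use the standard per-variable criterion: $\pres(\vecx)$ is monadically decomposable iff for every $\idxi$ the equivalence $\sim_\idxi$ on $\nats$, where $a \sim_\idxi b$ iff $\pres[\varx_\idxi/a]$ and $\pres[\varx_\idxi/b]$ are equivalent as formulas in the remaining variables, has finite index. The crux is to show that $\sim_\idxi$ having infinite index can be certified by a polynomial-size QF Presburger witness. Since each value $a$ of $\varx_\idxi$ is classified up to $\sim_\idxi$ by its residue modulo the lcm $K$ of moduli appearing in $\pres$ together with the ``saturation pattern'' it induces on the atoms of $\pres$ coupling $\varx_\idxi$ to the other variables, infinite index forces two values $a,b$ sharing a residue class but inducing different saturation patterns on some atom---and this condition is directly expressible as a QF Presburger formula $\varpres_\idxi$ of polynomial size. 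The reduction outputs $\bigvee_\idxi \varpres_\idxi$; combined with NP-membership of QF Presburger satisfiability, this yields both the efficient reduction and the coNP upper bound.

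\textbf{Exponential construction and main obstacle.} When every $\sim_\idxi$ has finite index, the same analysis bounds the index by $2^{O(|\pres|)}$ (at most $K$ residues times one saturation pattern per atom coupling $\varx_\idxi$ to other variables), and each class is definable by a polynomial-size monadic QF Presburger predicate $\monpres_{\idxi,r}(\varx_\idxi)$ built from congruences and inequalities. The decomposition is then the disjunction, over tuples $(r_1,\ldots,r_\numof)$ of classes jointly realized by some solution of $\pres$, of $\bigwedge_\idxi \monpres_{\idxi,r_\idxi}(\varx_\idxi)$, of exponential size overall. I expect the main obstacle to lie in the small-witness lemma: rigorously showing that whenever $\sim_\idxi$ has infinite index, distinguishing values of $\varx_\idxi$ and a separating assignment to the remaining variables exist with polynomially bounded magnitudes. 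Because the distinguishing condition is not itself an atom of $\pres$ but an existential statement built over atoms and over two separate instantiations of the other variables, off-the-shelf small-model theorems for QF Presburger do not apply verbatim, and a careful tailored analysis of how $\varx_\idxi$ interacts with the remaining variables through each atom of $\pres$ will be needed.
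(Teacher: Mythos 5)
Your lower bound is fine and is essentially a variant of the paper's: the paper reduces from propositional unsatisfiability, encoding propositional variables as divisibility constraints $x \eqdiv{p_i} 0$ so that the hard instance has only two free variables, whereas you reduce from quantifier-free Presburger unsatisfiability directly, carrying auxiliary variables $\vecz$. Both arguments hinge on the non-decomposability of $\varx=\vary$ and both are correct.

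The upper bound, however, has a genuine gap, and it is not quite the one you flagged. Your proposed certificate of infinite index --- two values of $\varx_\idxi$ in the same residue class modulo $K$ that ``induce different saturation patterns on some atom'' --- is not a correct characterization. For an atom $\consta\varx_\idxi + \linearf(\vecy) \geq \constb$ with $\consta \neq 0$, any two distinct values of $\varx_\idxi$ induce different constraints on $\vecy$, so your formula $\varpres_\idxi$ would be satisfiable even for decomposable inputs such as $\varx+\vary \geq 2$. What makes the criterion work in the paper is a lower threshold: $\pres$ is decomposable on $\varx$ iff for all $\varx_1,\varx_2 \geq \bound$ that satisfy the same divisibility constraints of $\pres$ (the formula $\samediv$) and for all $\vecy$, one has $\ap{\pres}{\varx_1,\vecy} \iff \ap{\pres}{\varx_2,\vecy}$. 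Values below $\bound$ are allowed to behave idiosyncratically and are handled in the decomposition by explicit enumeration of $\varx = 0, \ldots, \bound-1$ (which is where the exponential size comes from). The technical heart --- which you correctly identify as ``the main obstacle'' but do not supply --- is showing that an exponential $\bound$, representable in polynomially many bits and computable in polynomial time, suffices. The paper proves existence of some bound by a pumping argument on the arithmetic progressions underlying a hypothetical decomposition, and then bounds its size by converting the negated criterion into systems of linear equations and invoking the Chistikov--Haase bound on generators of their semilinear solution sets. Note also that you do not need polynomially bounded witness \emph{magnitudes}: once $\bound$ is fixed, the negated criterion is a polynomial-size existential Presburger sentence, and its NP-checkability comes for free from standard small-model theorems for existential Presburger arithmetic. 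So the tailored analysis you anticipate is really about establishing the threshold $\bound$, not about shrinking the separating assignment.
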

We show a new application of monadic decomposability in integer linear
arithmetic for SMT over strings, which is currently a
very active research area, e.g., see
\cite{trau18,CHLRW18,Berkeley-JavaScript,cvc4,S3,Abdulla14,Z3-str3,AGS18,philipp-survey,LB16,florin-sat}. One problem that makes string constraint solving difficult is the
presence of additional \emph{length constraints,} which forces the lengths of the
strings in the solutions to satisfy certain linear arithmetic constraints.
Whereas satisfiability of string equations with regular constraints is
PSPACE-complete (e.g. see \cite{J17,diekert}), it is a long-standing open
problem \cite{Vijay-length,buchi} whether word equations with length
constraints are decidable.
Length constraints are omnipresent in Kaluza~\cite{Berkeley-JavaScript},
arguably the first serious string constraint benchmarks obtained from real-world
JavaScript applications. Using our monadic decomposability solver, we show that 90\% of the Kaluza benchmarks are in fact
in a decidable fragment of string constraints, since occurring
length constraints can be completely removed by means of decomposition.

Next we extend our result to variadic decomposability (cf. \cite{Libkin03}).
\begin{theorem}[Variadic Decomposability] \label{thm:variadic}
    It is coNP-complete to decide if $\pres$ is $\Pi$-decomposable,
    given a quantifer-free formula $\pres(\bar x)$ of
    Presburger Arithmetic and a partition $\Pi = \{Y_1,\ldots,Y_n\}$ of
    $\bar x$.
    This is efficiently reducible to unsatisfiability of quantifier-free
    Presburger formulas.
    Moreover, if a decomposition exists, it can be constructed in
    exponential time. 
    \label{th:main2}
\end{theorem}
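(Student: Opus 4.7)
The plan is to lift the monadic argument behind Theorem~\ref{thm:mondec} to blocks of variables. The starting observation is a standard characterization: $\pres(\bar x)$ is $\Pi$-decomposable iff for each block $Y_i$ the section-indistinguishability relation on $\N^{|Y_i|}$ defined by $\vec a \sim_i \vec b \iff \forall\vec c.\ \pres(\vec a, \vec c) \leftrightarrow \pres(\vec b, \vec c)$ (where $\vec c$ ranges over the remaining blocks) has \emph{finite index}. Indeed, if each $\sim_i$ has finite index, choosing one Presburger-definable representative set per class yields monadic predicates $\monpres_{i,j}(\vecy_i)$ on each block, and $\pres$ is a Boolean combination of these; conversely any $\Pi$-decomposition immediately forces each $\sim_i$ to have finitely many classes.

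The core technical step is to characterize finiteness of $\sim_i$ by a small, syntactically readable invariant. I would define a \emph{region equivalence} $\sim_i^R$ on $\N^{|Y_i|}$: for each atom $\ell$ of $\pres$, let $L_\ell$ be the linear functional obtained by restricting $\ell$'s left-hand side to the variables in $Y_i$, and set $\vec a \sim_i^R \vec b$ iff, for every such $L_\ell$, the integers $L_\ell(\vec a)$ and $L_\ell(\vec b)$ lie in the same comparison cell with respect to the constants of $\pres$ and are congruent modulo the least common multiple of the moduli of $\pres$. This equivalence is Presburger-definable, of exponential index, and readable from the syntax. The central lemma, generalizing the monadic one, is that $\sim_i$ has finite index iff $\sim_i^R$ refines $\sim_i$; i.e., every failure of finite index is already witnessed by two $\sim_i^R$-equivalent tuples that are distinguished by some context $\vec c$.

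Given that lemma, the upper bound is immediate. Non-$\Pi$-decomposability is witnessed by some $i$, some pair $\vec a, \vec b \in \N^{|Y_i|}$ with $\vec a \sim_i^R \vec b$, and some $\vec c$ with $\pres(\vec a, \vec c) \neq \pres(\vec b, \vec c)$; membership in $\sim_i^R$ is a Boolean combination of Presburger atoms over the $L_\ell$, so the whole witness condition is a single quantifier-free Presburger formula. The small-model property then delivers both the polynomial reduction to QF Presburger unsatisfiability and coNP membership; coNP-hardness is inherited from Theorem~\ref{thm:mondec} by taking $\Pi$ to be the singleton partition. For the exponential-time construction, I would enumerate the exponentially many tuples $(\kappa_1,\ldots,\kappa_n)$ of $\sim_i^R$-classes, test on canonical representatives whether $\pres$ holds uniformly on the product cell (well-defined by the refinement lemma), and emit the disjunction of product cells on which $\pres$ is true, yielding a $\Pi$-decomposition of exponential size.

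The main obstacle is the refinement lemma in the multi-dimensional setting. When $|Y_i|=1$, region equivalence reduces to an interval/congruence classification of a single coordinate, and the usual swap argument preserves every atom's truth value by construction. When $|Y_i|>1$, region equivalence is a product of one-dimensional conditions on \emph{linear functionals} rather than on individual coordinates, so two region-equivalent tuples can differ wildly in the ambient space even though each $L_\ell$ value is matched up to comparison and congruence. The careful part is showing that this still suffices: for every context $\vec c$, the truth of each atom depends on the sum $L_\ell(\vec a) + L_\ell^{\text{other}}(\vec c)$, and I need a Presburger-style case analysis (over the position of $L_\ell^{\text{other}}(\vec c)$ relative to the comparison cells of $\pres$) to conclude that this truth value is preserved. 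Once this is established the remaining steps are analogues of the monadic proof, with dimensions and class counts increased but complexity bounds unchanged.
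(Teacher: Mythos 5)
Your overall architecture matches the paper's: reduce to separating one block from the rest, classify block-tuples by the values of the linear functionals of $\pres$ restricted to that block together with congruence information, test in coNP for a pair of equivalent tuples distinguished by some context, inherit hardness from the monadic case, and read off an exponential decomposition from the classes. However, your central refinement lemma is false for the region equivalence you define, and this is a genuine gap rather than a detail. You put $\vec a \sim_i^R \vec b$ when each $L_\ell(\vec a)$ and $L_\ell(\vec b)$ lie in the same comparison cell \emph{with respect to the constants of $\pres$} and agree modulo the lcm of the moduli. Consider $\pres(x,y) = (x + y \geq 10) \wedge (x \leq 5)$ with $Y_i = \{x\}$. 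This formula is decomposable (enumerate $x = 0,\dots,5$), so $\sim_i$ has finite index; yet $x=1$ and $x=2$ lie in the same cell determined by the constants $5$ and $10$ and satisfy no congruences, so $1 \sim_i^R 2$, while the context $y = 8$ distinguishes them ($1+8 < 10 \leq 2+8$). Hence $\sim_i^R$ does not refine $\sim_i$, your test reports a spurious counterexample, and the ``iff'' in your lemma fails in the direction you need for soundness of the coNP test. The underlying reason is exactly the obstacle you flag in your last paragraph: the truth of an atom depends on $L_\ell(\vec a) + L^{\mathrm{other}}_\ell(\vec c)$, and the context $\vec c$ can probe the \emph{exact} value of $L_\ell(\vec a)$, not merely its position relative to the syntactic constants. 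No case analysis over the position of $L^{\mathrm{other}}_\ell(\vec c)$ can repair this, because $\vec c$ ranges over infinitely many values.

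The paper's fix is quantitatively different in a way your proposal misses. It fixes a bound $\bound$ that is \emph{exponential} in $|\pres|$ (obtained from bounds on the base and period vectors of the semilinear sets of solutions, via Chistikov--Haase), guesses a partition of the functionals into ``bounded'' and ``unbounded'' ones, requires \emph{exact equality} $L_\ell(\vec a) = L_\ell(\vec b)$ for every bounded functional (exponentially many classes, not polynomially many cells), and only for functionals with $|L_\ell| \geq \bound$ does a pumping argument show that all such values behave alike for every fixed context. The completeness direction --- that failure of this test at threshold $\bound$ yields arbitrarily large counterexample pairs and hence non-decomposability --- is the technical core and is absent from your sketch. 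Your coNP-membership and exponential-construction steps do go through once the equivalence is corrected (the bound $\bound$ is polynomially encodable, so the witness formula stays quantifier-free Presburger of polynomial size), but as written the decision procedure you describe is unsound.
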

We show how this could be applied to quantifier elimination. In
particular, we show that if a formula
$\phi(\bar y) = \exists \bar x . \, \psi(\bar x,\bar y)$, where
$\psi$ is quantifier-free, is $\{X, Y\}$-decomposable---where $\bar x$
and $\bar y$ are linearizations of the variables in $X$ and $Y$---then
we can compute in exponential time a formula $\theta(\bar y)$ such
that
$\langle \mathbb{N},+\rangle \models \theta \leftrightarrow \phi$,
i.e., avoiding the standard double-exponential blow-up (cf.\
\cite{Weis97}).

\paragraph{Organization.} Preliminaries are in Section \ref{sec:prelim}.
Results on monadic (resp. variadic) decomposition are in
Section \ref{sec:mondec} (resp. Section \ref{sec:vardec}) and applications appear in Section \ref{sec:app}.



\section{Preliminaries}
\label{sec:prelim}

\subsection{Presburger Syntax}

In this paper we study the problem of monadic decomposition for
formulas in linear integer arithmetic. All of our results are
presented for Presburger arithmetic over natural numbers, but they
can be adapted easily to all integers.

\begin{definition}[Fragments of Presburger Arithmetic]
    A formula $\pres$ of \emph{Presburger arithmetic} is a formula of the form
    $\mathcal{Q}_1x_1\cdots \mathcal{Q}_nx_n . \, \psi$  where $\mathcal{Q}_i
    \in\{\forall,\exists\}$ and $\psi$ is a quantifier-free Presburger formula:
    \[
        \psi := \sum_\idxi \consta_\idxi \varx_i \sim \constb
              \ |\ \consta \varx \eqdiv{\denom} \constb \vary
              \ |\ \varx \eqdiv{\denom} \constc
              \ |\ \pres_1 \land \pres_2
              \ |\ \neg \pres
    \]
    where
        $\consta_\idxi, \consta, \constb \in \ints$,
        $\denom, \constc \in \nats$ with $0 \leq \constc < \denom$,
        variables $\varx_\idxi, \varx, \vary$ range over $\nats$, and
        ${\sim} \in \set{\leq, \geq}$.
    The operator $\eqdiv{\denom}$ denotes equality modulo $\denom$,
    i.e., $s \eqdiv{\denom} t$ whenever $s - t$ is a multiple of $\denom$.
    Formulas of the shape $\sum_\idxi \consta_\idxi \varx_i \sim \constb$,
    $\consta \varx \eqdiv{\denom} \constb \vary$, or
    $\varx \eqdiv{\denom} \constc$
    are called \emph{atoms.}

    Existential Presburger formulas are formulas of
    the form $\exists x_1,\ldots,x_n . \, \psi$
    for some quantifier-free Presburger formula~$\psi$.
    We let $\PresQF$ (resp. $\PresEx$) denote the set of all
    quantifier-free (resp., existential) Presburger formulas.
\end{definition}

Let
$\vecx = \tup{\varx_1, \ldots, \varx_\numof}$
be a tuple of integer variables.
We write
$\ap{\linearf}{\vecx} = \sum_\idxi \consta_\idxi \varx_\idxi$
for a linear sum over $\vecx$.
Let
$\vecy = \tup{\vary_1, \ldots, \vary_\varnumof}$.
By slight abuse of notation, we may also write
$\ap{\pres}{\vecx, \vecy}$
to denote a $\PresQF$ formula over the variables
$\varx_1, \ldots, \varx_\numof, \vary_1, \ldots, \vary_\varnumof$.

\subsection{Monadic Decomposability}
\label{sec:mon-decomp-prelims}

A quantifier-free formula~$\pres$ is called \emph{monadic} if every
atom in $\pres$ contains at most one variable, and it is called
\emph{monadically decomposable} if $\pres$ is equivalent to a monadic
formula~$\pres'$. In this case, $\pres'$ is also called a
\emph{decomposition} of $\pres$. For our main results we use a
slightly refined notion of a formula being decomposable:
\begin{definition}[Monadically Decomposable on $\varx$]
    Fix a logic $\Logic$ (e.g. $\PresQF$ or $\PresEx$). We say a formula
    $\ap{\pres}{\varx_1, \ldots, \varx_\numof}$ in $\Logic$
    is \emph{monadically decomposable on $\varx_\idxi$} whenever
    \[
        \ap{\pres}{\varx_1, \ldots, \varx_\numof}
        \equiv
        \bigvee\limits_{\idxj}
            \ap{\monpres_\idxj}{\varx_\idxi}
            \land
            \ap{\varpres_\idxj}{\varx_1, \ldots, \varx_{\idxi-1},
                                \varx_{\idxi+1}, \ldots, \varx_\numof}
    \]
    for some formulas $\monpres_\idxj$ and $\varpres_\idxj$ in $\Logic$.
\end{definition}

It can be observed that a formula is monadically decomposable if and only
if it is monadically decomposable on all variables occurring in the
formula (cf. Lemma \ref{lm:single_vs_all}). We expand on this for the variadic
case below.

We recall the following characterization of monadic decomposability for formulas
$\phi(x,y)$ with two free variables (cf.
\cite{CCG06,monadic-decomposition,BHLLN19,Libkin03}), which holds regardless of the
theory under consideration. This can be extended easily
to formulas with $k$ variables, but is not needed in this paper. Given a formula
$\phi(x,y)$, define the formula $\sim$ as follows:
\[
    x \sim x' := \forall y, y'.\ ( \phi(x,y) \wedge \phi(x',y') \to
    (\phi(x',y) \wedge \phi(x,y')))
\]
\begin{proposition}
    The relation $\sim$ is an equivalence relation. Furthermore,
    $\phi(x,y)$ is monadically decomposable iff $\sim$ has a finite index
    (i.e. the number of $\sim$-equivalence classes is finite).
    \label{prop:char}
\end{proposition}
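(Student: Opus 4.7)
The argument breaks into verifying that $\sim$ is an equivalence relation and then proving the two directions of the finite-index characterization. For the former, reflexivity is immediate and symmetry follows by renaming $y \leftrightarrow y'$ in the universal statement defining $x \sim x'$. Transitivity is the delicate step: given $x \sim x'$ and $x' \sim x''$, and witnesses $y, y'$ with $\phi(x,y) \wedge \phi(x'', y')$, the plan is to use a pivot $z$ with $\phi(x', z)$. First apply $x \sim x'$ to $(y, z)$ to obtain $\phi(x', y)$; then apply $x' \sim x''$ to $(y, y')$ to obtain $\phi(x'', y) \wedge \phi(x', y')$; finally apply $x \sim x'$ to $(y, y')$ to conclude $\phi(x, y')$. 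The edge case where no such pivot exists (i.e.\ the fibre $\{y : \phi(x', y)\}$ is empty) is handled by implicitly restricting $\sim$ to the natural domain $\{x : \exists y.\ \phi(x,y)\}$; outside this domain $\phi$ vanishes and membership can be absorbed into any class.

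For the characterization itself, the ($\Rightarrow$) direction is a type-counting argument: if $\phi \equiv \bigvee_{j=1}^{k} \monpres_j(x) \wedge \varpres_j(y)$, associate to each $x$ the type $T_x \subseteq \{1,\ldots,k\}$ of indices $j$ for which $\monpres_j(x)$ holds. A direct unfolding shows that $T_x = T_{x'}$ forces the fibres $\{y : \phi(x,y)\}$ and $\{y : \phi(x',y)\}$ to coincide, and hence $x \sim x'$; so $\sim$ has at most $2^k$ classes. For ($\Leftarrow$), suppose $\sim$ has finitely many classes $C_1, \ldots, C_n$, and let $S_i := \{y : \phi(x,y)\}$ for any $x \in C_i$, which is well-defined by the definition of $\sim$. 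Semantically this yields $\phi(x,y) \Leftrightarrow \bigvee_i (x \in C_i \wedge y \in S_i)$, and the task reduces to expressing each $C_i$ and each $S_i$ by a formula in the ambient logic $\Logic$.

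The main obstacle I anticipate is the final definability step: producing actual $\Logic$-formulas for the semantic classes $C_i$. For Presburger arithmetic this is benign, since ``$x \sim x_0$'' is itself expressible for any fixed representative $x_0$ as a universal Presburger statement, and one round of quantifier elimination produces a quantifier-free formula defining the class; the same observation makes the resulting decomposition live in $\PresQF$, which is what the applications in the rest of the paper require.
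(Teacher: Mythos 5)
The paper does not actually prove this proposition --- it is recalled from \cite{CCG06,monadic-decomposition,BHLLN19,Libkin03} without argument --- so there is no in-paper proof to compare against; judged on its own, your proof is correct and is the standard one (symmetry by swapping $y$ and $y'$, transitivity via a pivot in the fibre of the middle element, the $2^k$ type bound for the forward direction, fibre-equality classes for the converse). Two remarks are worth recording. First, the edge case you flag in the transitivity step is not a mere technicality: as literally written, $\sim$ is \emph{not} transitive. For $\phi(x,y) := (x=0 \wedge y=0) \vee (x=1 \wedge y=1)$ one has $0 \sim 2$ and $2 \sim 1$ vacuously, because the fibre of $2$ is empty and the antecedent of the defining implication never holds, yet $0 \not\sim 1$. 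Your restriction of $\sim$ to the support $\{x : \exists y.\ \phi(x,y)\}$ (equivalently, reading $\sim$ as fibre equality) is therefore \emph{necessary} for the first sentence of the proposition to hold, not just convenient; it does not disturb the finite-index characterization, since the empty-fibre elements simply form one additional class. Second, your definability step is the right resolution for the setting of this paper: for a fixed representative $x_0$ the class $\{x : x \sim x_0\}$ is a universal Presburger formula with the numeral $x_0$ as a parameter, and quantifier elimination places both it and the fibre formula $\phi(x_0,y)$ in $\PresQF$. In an arbitrary theory this step genuinely requires that representatives be nameable (or that the logic admit parameters and enough quantification), which is the reason the general form of this characterization is usually stated for full first-order theories rather than quantifier-free fragments.
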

Using this proposition, it is easy to show that over a structure with
an infinite domain (e.g. integer linear arithmetic) the formula $x=y$
is not monadically decomposable. As was noted already in
\cite{Libkin03}, to check monadic decomposability of a
formula~$\phi$ in Presburger Arithmetic in general, we may simply
check if there is an upper bound $B$ on the smallest representation of
every $\sim$-equivalence class, i.e.,
\begin{equation*}
  \exists B. \forall x. \exists x_s.\ ( x_s \leq B \wedge x_s \sim x )~.
\end{equation*}
However, to derive tight complexity bounds for checking monadic
decomposability, this approach is problematic, since the above
characterisation has multiple quantifier alternations. Using known results
(e.g. \cite{Haase14}), one would only obtain
an upper bound in the weak exponential hierarchy \cite{Haase14}, which
only admits double-exponential time algorithms.

\subsection{Variadic Decomposability}
\label{sec:var-decomp-prelims}

The notion of a \emph{variadic decomposition} generalises monadic
decomposition by considering partitions of the occurring
variables.
\begin{definition}[$\Pi$-Decomposable]
    Fix a logic $\Logic$ (e.g. $\PresQF$ or $\PresEx$). Take a formula
    $\ap{\pres}{\varx_1, \ldots, \varx_\numof}$ in $\Logic$ and a partition
    $\Pi = \{Y_1,\ldots,Y_\varnumof\}$
    of
    $\varx_1, \ldots, \varx_\numof$.
    We say $\pres$ is \emph{$\Pi$-decomposable} whenever
    \[
        \ap{\pres}{\varx_1, \ldots, \varx_\numof}
        \equiv
        \bigvee\limits_{\idxi}
            \ap{\monpres^1_\idxi}{\vecy_1}
            \land
            \cdots
            \land
            \ap{\monpres^\varnumof_\idxi}{\vecy_\varnumof}
    \]
    for some formulas $\monpres^\idxj_\idxi$ in $\Logic$ and linearizations $\vecy_\idxj$ of $Y_\idxj$.
\end{definition}

Observe that a formula
$\ap{\pres}{\varx_1, \ldots, \varx_\numof}$
is monadically decomposable on $\varx_\idxi$ iff it is $\Pi$-decomposable with
$\Pi = \{
    \{\varx_\idxi\},
    \{\varx_1, \ldots, \varx_{\idxi-1},
      \varx_{\idxi+1}, \ldots, \varx_\numof\}
\}$.
Moreover, we say a formula $\pres$ over the set of variables $X$
is \emph{variadic decomposable on $Y$} whenever it is $\Pi$-decomposable with
$\Pi = \{Y, X \setminus Y\}$.

General $\Pi$-decompositions can be computed by
decomposing on binary partitions~$\{Y, X \setminus Y\}$, which is why
we focus on this binary case in the rest of the paper.
We argue why this is the case below.

Let a formula~$\phi$ and $\Pi = \{Y_1, \ldots, Y_\varnumof\}$ be given.
We can first decompose separately on each
$\{Y_\idxi, Y\}$
where
$Y = Y_1 \cup \cdots \cup Y_{\idxi-1} \cup Y_{\idxi+1} \cup \cdots \cup Y_{\varnumof}$.
Using the algorithm in Section~\ref{sec:vardec} we obtain for each $\idxi$ a decomposition of a specific form:
\[
    \bigvee\limits_{\idxj}
        \ap{\monpres^\idxi_\idxj}{\vecy_\idxi}
        \land
        \ap{\phi}{\vecy_1, \ldots, \vecy_{\idxi-1},
                  \vecc^\idxi_\idxj,
                  \vecy_{\idxi+1}, \ldots, \vecy_{\varnumof}}
    \ .
\]
Note, these decompositions can be performed independently using the algorithm in Section~\ref{sec:vardec} and the second conjunct of each disjunct is $\phi$ with $\vary_\idxi$ replaced by fixed constants $\vecc_\idxj$.
Additionally, each
$\monpres^\idxi_\idxj$
is polynomial in size and each
$\vecc^\idxi_\idxj$
can be represented with polynomially many bits.
We note also that our algorithm ensures that each
$\monpres^\idxi_\idxj$
is satisfiable.

Given such decompositions, we can recursively decompose $\phi$ on $\Pi$.
We first use the above decomposition for $\idxi = 1$ and obtain
\[
    \bigvee\limits_{\idxj}
        \ap{\monpres^1_\idxj}{\vecy_1}
        \land
        \ap{\phi}{\vecc^1_\idxj, \vecy_2, \ldots, \vecy_{\varnumof}}
    \ .
\]

Next, we use the decomposition for $\idxi = 2$ to decompose the copies of $\phi$ in the decomposition above.
We obtain
\[
    \bigvee\limits_{\idxj_1} \brac{
        \ap{\monpres^1_{\idxj_1}}{\vecy_1}
        \land
            \bigvee\limits_{\idxj_2}
                \ap{\monpres^2_{\idxj_2}}{\vecy_2}
                \land
                \ap{\phi}{\vecc^1_{\idxj_1},
                          \vecc^2_{\idxj_2},
                          \vecy_3, \ldots, \vecy_{\varnumof}}
    } \ .
\]

This process repeats until all $Y_\idxi$ have been considered.
If $\pres$ is $\Pi$-decomposable, we find a decomposition.
If $\pres$ is not $\Pi$-decomposable, then it would not be possible to do the independent decompositions for each $\idxi$.
Thus, for $\Pi = \{Y_1,\ldots, Y_m\}$, we can use variadic decompositions on $Y_i$ to compute $\Pi$-decompositions.

The above algorithm runs in exponential time due both to the exponential size of the decompositions and the branching caused by the disjuncts.
If we are only interested in whether a formula is $\Pi$-decomposable, it is
enough to ask whether it is decomposable on $Y_\idxi$ for \emph{each} $\idxi$.
In particular, a formula $\phi(\bar x)$ is monadically decomposable iff
$\phi$ is decomposable for each variable $y \in \bar x$. Since the complexity
class coNP is closed under intersection, we obtain the following:
\begin{lemma}
    A coNP upper bound for monadic decomposability on a given
    variable $y$ implies a coNP upper bound for monadic decomposability.
    Likewise, a coNP upper bound for variadic decomposability on a given subset
    $Y$ of variables implies a coNP upper bound for $\Pi$-decomposability.
    \label{lm:single_vs_all}
\end{lemma}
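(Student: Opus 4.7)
The plan is to reduce deciding full monadic (resp.\ $\Pi$-) decomposability to polynomially many instances of the per-variable (resp.\ per-block) problem, and then exploit the closure of coNP under polynomial-time conjunctive truth-table reductions. Concretely, to decide whether $\ap{\pres}{\varx_1,\ldots,\varx_\numof}$ is monadically decomposable I will ask, for each $\idxi$, whether $\pres$ is monadically decomposable on $\varx_\idxi$, and accept iff all $\numof$ answers are ``yes''. Since $\numof$ is linear in the input size, and coNP is closed under the AND of polynomially many coNP queries (negating yields an NP procedure that guesses a single rejecting index together with its NP witness of non-decomposability), a coNP procedure for the per-variable problem lifts to a coNP procedure for the full problem. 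The same template applies verbatim to the variadic case.

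The content of the reduction is the correctness claim that $\pres$ is monadically decomposable iff it is monadically decomposable on every $\varx_\idxi$. The forward direction is immediate: any DNF-style monadic decomposition $\pres\equiv\bigvee_\idxj\bigwedge_\idxi \ap{\alpha_\idxj^\idxi}{\varx_\idxi}$ is already in the shape required for decomposability on any single $\varx_\idxi$ after regrouping conjuncts. For the converse I would use an equivalence-class argument generalising Proposition~\ref{prop:char}: for each $\idxi$, define $a \sim_\idxi b$ to hold when substituting $a$ or $b$ for $\varx_\idxi$ yields logically equivalent formulas in the remaining variables. A decomposition on $\varx_\idxi$ with monadic parts $\ap{\monpres_1}{\varx_\idxi},\ldots,\ap{\monpres_K}{\varx_\idxi}$ forces $\sim_\idxi$ to have index at most $2^K$ --- two values are $\sim_\idxi$-equivalent precisely when they satisfy the same subset of the $\monpres_\idxj$'s --- and each class is therefore cut out by a Boolean combination $\ap{M_\idxi^\ell}{\varx_\idxi}$ of them. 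Picking a representative $a_\idxi^\ell$ per class and substituting variables one at a time (each step preserving equivalence by definition of $\sim_\idxi$) then yields the explicit monadic form
\[
    \ap{\pres}{\varx_1,\ldots,\varx_\numof} \;\equiv\; \bigvee_{\ell_1,\ldots,\ell_\numof} \ap{\pres}{a_1^{\ell_1},\ldots,a_\numof^{\ell_\numof}} \wedge \bigwedge_\idxi \ap{M_\idxi^{\ell_\idxi}}{\varx_\idxi},
\]
keeping only those disjuncts whose constant prefix evaluates to true.

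The variadic version proceeds identically after replacing single variables $\varx_\idxi$ with tuples $\vecy_\idxi$ and using the tuple-level equivalence $\sim_{Y_\idxi}$. The main subtlety to keep in mind is that the finite $\sim_\idxi$-classes must be \emph{Presburger-definable}, not merely finite in number, for the displayed decomposition to live in $\PresQF$; this comes for free here because the per-variable decomposability hypothesis already supplies the monadic predicates $\monpres_\idxj$ out of which the class definitions are built as Boolean combinations. The complexity claim itself is insensitive to this definability issue, however: it depends only on the semantic equivalence between full decomposability and per-variable decomposability, together with the closure of coNP under polynomial conjunction.
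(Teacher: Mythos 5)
Your proof is correct and takes essentially the same route as the paper: the paper likewise reduces full decomposability to per-variable (per-block) decomposability and invokes closure of coNP under intersection, justifying the converse direction by the same device of substituting fixed representatives one block at a time (this is spelled out in the recursive construction of Section~\ref{sec:var-decomp-prelims}). One harmless nitpick: satisfying the same subset of the $\Delta_j$'s is sufficient but not necessary for $\sim_i$-equivalence, so your Boolean cells refine the $\sim_i$-classes rather than coincide with them --- which is all the construction needs.
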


\subsection{Example}

Consider the formula
$\ap{\pres}{\varx, \vary, \varz}$
given by
$\varz = \varx + 2 \vary \land \varz < 5$.
This formula is monadically decomposable, which means, it is $\Pi$-decomposable for
$\Pi = \{\{x\}, \{y\}, \{z\}\}$.

Our algorithm will first take a decomposition on $\varx$ and might obtain
$\bigvee^4_{\idxi = 0}
    \ap{\monpres^1_\idxi}{\varx}
    \land
    \ap{\pres}{\idxi, \vary, \varz}$
where
$\ap{\monpres^1_\idxi}{\varx} = (\varx = \idxi)$
and
$\ap{\pres}{\idxi, \vary, \varz} =
    (\varz = \idxi + 2y)
    \land
    \varz < 5$.
Next, we use a decomposition on $\vary$.
For each $\ap{\pres}{\idxi, \vary, \varz}$ we substitute
$
    \bigvee ^{
        2 - \lceil \frac{\idxi}{2} \rceil
    }_{\idxj = 0}
        \vary = \idxj
        \land
        \ap{\pres}{\idxi, \idxj, \varz}
$, and as the  final decomposition we get
\begin{equation*}
\bigvee\limits^4_{\idxi = 0}
    \bigvee\limits^{
        2 - \lceil \frac{\idxi}{2} \rceil
    }_{\idxj = 0}
    \varx = \idxi
        \land
        \vary = \idxj
        \land
        \varz = \idxi + 2 \idxj ~.
\end{equation*}



\section{Monadic Decomposability}
\label{sec:mondec}

\subsection{Lower Bounds}

We first show that unsatisfiability of Boolean formulas can be reduced
to monadic decomposability of formulas with only two variables,
directly implying coNP-hardness:
\begin{lemma}[coNP-Hardness]
    Deciding whether a formula~$\pres(x,y)$ in $\PresQF$ is monadic
    decomposable is coNP-hard.
\end{lemma}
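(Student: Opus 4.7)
The plan is to reduce unsatisfiability of propositional CNF, a coNP-complete problem, to monadic decomposability of $\PresQF$ formulas over two variables. Given $\ap{F}{p_1,\ldots,p_n}$ in CNF, I will construct in polynomial time a formula $\ap{\pres}{x,y}$ such that $\pres$ is monadically decomposable iff $F$ is unsatisfiable; coNP-hardness follows.

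To fit the assignment into a single Presburger variable, I encode it by the Chinese Remainder Theorem. Choose the first $n$ primes $q_1 < \cdots < q_n$; by the Prime Number Theorem each has magnitude polynomial in $n$, so they can be written down in polynomial space. Interpret $p_i$ as true iff $x \eqdiv{q_i} 1$, and enforce well-formedness by conjoining $(x \eqdiv{q_i} 0) \lor (x \eqdiv{q_i} 1)$ for every $i$. Each clause of $F$ then translates to a $\PresQF$ disjunction over the atoms $x \eqdiv{q_i} 1$ (for a positive literal $p_i$) and $x \eqdiv{q_i} 0$ (for a negative literal $\neg p_i$). Conjoining well-formedness with the translated clauses yields a polynomial-size formula $\ap{\mathrm{Sat}}{x} \in \PresQF$ that is satisfiable iff $F$ is; by CRT its solution set is periodic with period $\prod_i q_i$, so it is either empty or infinite.

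Define the target as $\ap{\pres}{x,y} \eqdef \ap{\mathrm{Sat}}{x} \land x = y$, where the equality is rewritten into the paper's grammar as $x - y \leq 0 \land -x + y \leq 0$. If $F$ is unsatisfiable, then $\pres$ is identically false and hence trivially monadic. If $F$ is satisfiable, then infinitely many $x_0 < x_1 < \cdots$ satisfy $\mathrm{Sat}$, and for each such $x_i$ the fibre $\setcomp{y}{\ap{\pres}{x_i,y}}$ is the distinct singleton $\{x_i\}$. To rule out a decomposition $\bigvee_j \ap{A_j}{x} \land \ap{B_j}{y}$, observe that the fibre at $x_i$ is determined entirely by the subset $\setcomp{j}{\ap{A_j}{x_i}}$, so only finitely many distinct fibres are possible, contradicting the existence of infinitely many distinct singletons. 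Equivalently, the $x_i$ are pairwise $\sim$-inequivalent, so by Proposition~\ref{prop:char} the $\sim$-index is infinite and $\pres$ admits no monadic decomposition.

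The principal obstacle is producing a polynomial-size $\PresQF$ formula in a single variable that effectively runs a SAT check: a naïve bit extraction via atoms $x \eqdiv{2^{i+1}} r$ would require exponentially many residues per bit, blowing up the reduction. The CRT encoding with small primes avoids this and, as a bonus, automatically produces the periodicity of $\mathrm{Sat}$, which is exactly what furnishes the infinite family of distinct singleton fibres needed to exclude a monadic decomposition on the satisfiable side.
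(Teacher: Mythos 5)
Your proposal is correct and follows essentially the same route as the paper: encode a propositional assignment into a single integer via residues modulo the first $n$ primes (CRT giving an infinite, periodic solution set), conjoin with $x=y$, and conclude non-decomposability on the satisfiable side from the infinitude of distinct singleton fibres (equivalently, infinitely many $\sim$-classes as in Proposition~\ref{prop:char}). The only cosmetic differences are your use of residue $1$ for ``true'' with added well-formedness disjunctions, where the paper simply uses $x \eqdiv{p_i} 0$ directly.
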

\begin{proof}
    We reduce from unsatisfiability of propositional formulas to monadic
    decomposability of $\pres(x,y)$. Take a propositional formula
    $\ap{\propf}{\varx_1, \ldots, \varx_\numof}$. Let $p_1,\ldots,p_\numof$ be
    the first $\numof$ primes. Let $\psi(x)$ be the formula obtained from
    $\propf$ by replacing each occurrence of $\varx_i$ by $x \eqdiv{p_i} 0$.
    Given an assignment $\nu: \{x_1,\ldots,x_n\} \to \{0,1\}$, we let
    \[
        H_\nu = \{ m \in \N \mid
        \forall 1 \leq i \leq n.\ ( m \eqdiv{p_i} 0 \leftrightarrow \nu(x_i) = 1) \}~.
            \]
            Thanks to the Chinese Remainder Theorem, $H_\nu$ is
            non-empty and periodic with
            period~$p = \prod_{i=1}^\numof p_i$, which implies that
            $H_\nu$ is infinite for every $\nu$. We also have that
            $\nu \models \propf$ iff, for each $n \in H_\nu$,
            $\psi(n)$ is true.

    Now define
    $\pres(x,y) = (\psi(x) \wedge x=y)$.
    If $\propf$ is unsatisfiable, then $\psi$ is unsatisfiable and so it is
    decomposable. Conversely, if $\propf$ can be satisfied by some
    assignment $\nu$, then $\pres(m,m)$ is true for all (infinitely many)
    $m \in H_\nu$. Since
    all solutions to $\pres(x,y)$ imply that $x=y$, by Proposition
    \ref{prop:char} we have that $\pres$ is
    not monadically decomposable.
    \qed
\end{proof}
\OMIT{
\begin{proof}

    $S$ can be turned into
  an equisatisfiable Presburger formula by reinterpreting each
  propositional variable $\varx_i$ as integer variable
  $\varx_i'$ restricted to range over $0, 1$, and each atom $\varx_i$ as the formula
    $\varx_i' = 1$. Let $\ap{\propf'}{\varx_1',\ldots,\varx_\numof'}$ be the resulting
    $\PresQF$ formula.
  Consider then the following $\PresQF$ formula:
    \[
        \ap{\pres}{\varx_1, \ldots, \varx_\numof,
                   \varx, \vary}
        :=~
        \ap{\propf'}{\varx_1', \ldots, \varx_\numof'} \land \varx = \vary
    \]
    where $\varx$ and $\vary$ are fresh variables ranging over $\nats$.
    Then, $\pres$ is monadically decomposable on $\varx$ if and only if
    $\propf$ is unsatisfiable. That is, if $\propf$ is satisfiable, then
    $\pres$ cannot be decomposed on $\varx$, since $\varx = \vary$
    cannot be decomposed.
    If $\propf$ is not satisfiable, then
    $\pres$ is always false, which is trivially decomposable on
    $\varx$. \qed
\end{proof}
}

\OMIT{
Monadic decompositions can
sometimes be surprisingly succinct. For instance, intuition would tell
that a diagonal line
\begin{equation*}
  \ap{\pres_n}{\varx, \vary} =~
  \varx + \vary = 2^n
\end{equation*}
only has decompositions of exponential size, but intuition is
mistaken.
Let $p_n$ be the $n$-th prime. Then according to the Prime Number Theorem \cite{Hardy-number}, we have $n(\ln n + \ln\ln n - 1) < p_n < n(\ln n + \ln\ln n) $ for all $n\ge 6$.
Therefore, the size of $p_n$ is polynomial in that of $n$. Moreover, $p_1 \cdots p_n > 2^n$ for all $n > 2$.
Then thanks to the Chinese
remainder theorem (cf. e.g. \cite{Hardy-number}), we can formulate a monadic decomposition of size polynomial in $n$ as follows,
\begin{equation*}
  \ap{\pres_n}{\varx, \vary} ~\Leftrightarrow~
  \varx \leq 2^n \wedge \vary \leq 2^n \wedge
  \bigwedge_{i=1}^n \bigvee_{j=0}^{p_i-1}
  \big(\varx \eqdiv{p_i} j \leftrightarrow 2^n - \vary \eqdiv{p_i} j\big).
\end{equation*}
Note that this formula can also be turned into conjunctive normal form, with
only a polynomial increase in size.
\anthony{The above upper bound should either be removed, or that we should say
that cnf and dnf are incomparable, and the above is in cnf.}
}

We next provide exponential lower bounds for decompositions in either
\emph{disjunctive normal form} (DNF) or \emph{conjunctive normal form} (CNF).
DNF has been frequently used to represent monadic decompositions by
previous papers (e.g. \cite{Libkin03,BHLLN19,CCG06}), and it is most suitable
for applications in quantifier elimination.
\begin{lemma}[Size of Decomposition]
  There exists a family $\{\pres_n(\varx, \vary)\}_{n \in \nats}$ of formulas in
  $\PresQF$ such that $\pres_n$ grows
  linearly in $n$, while the smallest decomposition on $\varx$ in
  DNF/CNF is exponential in $n$.
\end{lemma}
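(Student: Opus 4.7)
The plan is to use the simple threshold family
\[
    \ap{\pres_n}{\varx, \vary} \eqdef \varx + \vary \geq 2^n~,
\]
whose representation has size $O(n)$ (the constant $2^n$ needs $n$ bits) and which is monadically decomposable on either variable, witnessed by the DNF
$\bigvee_{i=0}^{2^n} \brac{\varx \geq i \land \vary \geq 2^n - i}$
of size $2^n + 1$. The task is then to show this is essentially optimal for both DNF and CNF.

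The first step is an anti-chain argument giving the DNF lower bound. Suppose
$\ap{\pres_n}{\varx, \vary} \equiv \bigvee_{\idxj=1}^{N} \brac{\ap{\monpres_\idxj}{\varx} \land \ap{\varpres_\idxj}{\vary}}$
is any monadic DNF decomposition, and consider the $2^n + 1$ solution points $(i, 2^n - i)$ for $i = 0, \ldots, 2^n$ on the boundary line $\varx + \vary = 2^n$. Each such point must be covered by some disjunct. If one disjunct $\idxj$ contained two of them, say $(i, 2^n - i)$ and $(i', 2^n - i')$ with $i < i'$, then since $\monpres_\idxj \land \varpres_\idxj$ defines a combinatorial rectangle in the $(\varx, \vary)$-plane, the point $(i, 2^n - i')$ would also satisfy this disjunct; but its coordinates sum to $2^n - (i' - i) < 2^n$, contradicting $\pres_n$. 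Hence $N \geq 2^n + 1$.

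The CNF lower bound follows by duality. Given a monadic CNF
$\pres_n \equiv \bigwedge_{\idxj=1}^N \brac{\ap{\monpres_\idxj}{\varx} \lor \ap{\varpres_\idxj}{\vary}}$
of size $N$, pushing negation inward yields a monadic DNF for
$\neg\pres_n = (\varx + \vary \leq 2^n - 1)$
of the same size, since $\neg\monpres_\idxj$ and $\neg\varpres_\idxj$ remain monadic in $\varx$ and $\vary$ respectively. The same rectangle argument applied to the $2^n$ boundary points $(i, 2^n - 1 - i)$ on $\varx + \vary = 2^n - 1$ forces $N \geq 2^n$: two such points in a common cell would produce, via the product, a point with coordinate sum $2^n - 1 + (i'-i) \geq 2^n$, which contradicts $\neg\pres_n$.

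I do not foresee any substantive obstacle. The heart of the argument is the standard observation that monadic cells in two variables are combinatorial rectangles, together with the fact that the solution boundary of $\pres_n$, and symmetrically of $\neg \pres_n$, is a diagonal line of exponential length whose lattice points form an anti-chain under coordinatewise covering by such rectangles. The only mild technicality worth recording is that quantifier-free monadic formulas are closed under negation, which is what makes the CNF-to-DNF duality preserve the monadic structure.
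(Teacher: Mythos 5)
Your proposal is correct and follows essentially the same route as the paper: the paper uses the mirror-image family $\varx + \vary \leq 2^n$, counts disjuncts via the "upper right corner" of each rectangle (the same rectangle/mixing argument you use), and handles CNF by a reflection substitution $\neg\psi_n(2^n-\varx+1,\,2^n-\vary)$ rather than your direct De Morgan dualization. Your two-points-in-one-rectangle phrasing of the DNF bound and the plain complementation for CNF are slightly more streamlined but not substantively different.
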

\begin{proof}
  Consider the formulas
  $\ap{\pres_n}{\varx, \vary} = (\varx + \vary \leq 2^n)$.
  Using a binary encoding of constants, the size of the formulas is
  linear in $n$. We show that decompositions in DNF/CNF must be
  exponential in size.

  \emph{Disjunctive:} Suppose
  $ \ap{\psi_n}{\varx, \vary} = \bigvee_i \psi_i^x(x) \wedge
  \psi_i^y(y) $ is a monadic decomposition in DNF. Each disjunct
  $\psi_i^x(x) \wedge \psi_i^y(y)$, if it is satisfiable at all, has
  an upper right corner~$(x_i, y_i)$ such that
  $\psi_i^x(x_i) \wedge \psi_i^y(y_i)$ holds, but
  $\psi_i^x(x) \wedge \psi_i^y(y) \Rightarrow x \leq x_i \wedge y \leq
  y_i$. This immediately implies that exponentially many disjuncts are
  needed to cover the exponentially many points on the
  line~$x + y = 2^n$.

  \emph{Conjunctive:} Suppose $\psi_n(\varx, \vary)$ is a succinct
  monadic decomposition of $\pres_n$ in CNF. Since $\neg \psi_n(\varx, \vary) \equiv 2^n+1 \le x + y \equiv (2^n-x+1)+(2^n-y) \le 2^n$, it follows that $\neg \psi_n(2^n - x + 1, 2^n - y) \equiv (2^n-(2^n-x+1)+1)+(2^n-(2^n-y)) \le 2^n \equiv x+y \le 2^n$. Therefore, $\neg \psi_n(2^n - x + 1, 2^n - y)$ is a succinct decomposition of $\pres_n$ in
  DNF, contradicting the lower bound for DNFs.\qed
\end{proof}

\subsection{Upper Bound}
\label{sec:monadicUpper}

We prove Theorem~\ref{thm:mondec}. Following Lemma \ref{lm:single_vs_all},
it suffices to show that testing decomposability on a variable $\varx$ is in
coNP and that a decomposition can be computed in exponential time.
Assume without loss of generality that we have
$\ap{\pres}{\varx, \vecy}$
where
$\vecy = \tup{\vary_1, \ldots, \vary_\numof}$, and that
we are decomposing on the first variable~$x$.

We claim that $\pres$ is monadically decomposable on $\varx$ iff
\[
    \forall \varx_1, \varx_2 \geq \bound .
        \forall \vecy .~~
            \ap{\samediv}{\varx_1, \varx_2, \vecy}
            \Rightarrow
            \brac{
                \ap{\pres}{\varx_1, \vecy}
                \iff
                \ap{\pres}{\varx_2, \vecy}
            }
\]
where
    $\bound$ is a bound exponential in the size of $\pres$ and $\samediv$ is a formula asserting that $\varx_1$ and $\varx_2$ satisfy the same divisibility constraints.
This bound is computable in polynomial time and is described in Section~\ref{sec:bound-size-mon}.
To define $\samediv$, let $\alldivshort$ be the set of all divisibility constraints
$\consta \varz_1 \eqdiv{\denom} \constb \varz_2$
or
$\varz_1 \eqdiv{\denom} \constc$
appearing (syntactically) in $\pres$.
Assume without loss of generality that $\varx$ always appears on the left-hand side of a divisibility constraint (i.e., in the $\varz_1$ position of $\consta \varz_1 \eqdiv{\denom} \constb \varz_2$).
We then define
\[
    \ap{\samediv}{\varx_1, \varx_2, \vecy}
    ~=~~
    \brac{\begin{array}{c}
        \bigwedge\limits_{
            \consta x \eqdiv{\denom} \constb \varz
            \in
            \alldivshort
        }
            \brac{\consta \varx_1 \eqdiv{\denom} \constb \varz}
            \iff
            \brac{\consta \varx_2 \eqdiv{\denom} \constb \varz}
        \\ \land \\
        \bigwedge\limits_{
            x \eqdiv{\denom} \constc
            \in
            \alldivshort
        }
            \brac{\varx_1 \eqdiv{\denom} \constc}
            \iff
            \brac{\varx_2 \eqdiv{\denom} \constc}
    \end{array}} \ .
\]
We prove the claim in the following sections and simultaneously show how to construct the decomposition.
Once we have established the above, we can test non-decomposability on $\varx$ by checking
\[
    \exists \varx_1, \varx_2 \geq \bound . \,
        \exists \vecy . ~~
            \ap{\samediv}{\varx_1, \varx_2, \vecy}
            \land
            \ap{\pres}{\varx_1, \vecy}
            \land
            \neg \ap{\pres}{\varx_2, \vecy}
\]
which is decidable in NP.
Thus we obtain a coNP decision procedure because the above formula is polynomial in the size of $\pres$.

\subsubsection{Example}

We consider some examples.
First consider the formula
$\varx = \vary$
that cannot be decomposed on $\varx$.
Since there are no divisibility constraints, $\samediv$ is simply $\mathit{true}$.
It is straightforward to see that,
$\forall \bound. \exists \varx_1, \varx_2 \geq \bound .
    \exists \vary .~
        \mathit{true} \land \varx_1 = \vary \land \varx_2 \neq \vary$,
for example by setting
    $\varx_1 = \bound$,
    $\varx_2 = \bound + 1$, and
    $\vary = \bound$.

Now consider the monadically decomposable formula
\[
    \ap{\pres}{\varx, \vary, \varz} =
        \varx + 2 \vary \geq 5 \land \varz < 5 \land \varx \eqdiv{2} \vary
    \ .
\]
In this case
$\ap{\samediv}{\varx_1, \varx_2, \vary, \varz} =~
    (\varx_1 \eqdiv{2} \vary
     \iff
     \varx_2 \eqdiv{2} \vary)$.
We can verify
\[
    \forall \varx_1, \varx_2 \geq \bound .
        \forall \vary, \varz .~~
            \ap{\samediv}{\varx_1, \varx_2, \vary, \varz}
            \Rightarrow
            \brac{
                \ap{\pres}{\varx_1, \vary, \varz}
                \iff
                \ap{\pres}{\varx_2, \vary, \varz}
            }
\]
holds, as it will be the case that $5 < \bound$ and for all $\varx > 5$ the formula $\pres$ will hold whenever
$\varx \eqdiv{2} \vary$
holds and
$\varz < 5$.
The precondition $\samediv$ ensures that the if and only if holds.
We will construct the decomposition in the next section.

\subsubsection{Expanded Divisibility Constraints}

Observe that divisibility constraints are always decomposable.
In particular,
$\consta \varz_1 \eqdiv{\denom} \constb \varz_2$
is equivalent to a finite disjunction of clauses
$\varz_1 \eqdiv{\denom'} \constc \land
 \varz_2 \eqdiv{\denom'} \constc$
where $\denom'$ and $\constc$ are bounded by a multiple of $\consta, \constb$ and $\denom$.
The expansion is exponential in size, since the values up to $\denom'$ have to be enumerated explicitly.

We define $\alldivlong$ be the set of all constraints of the form
$\varx \eqdiv{\denom} \constc'$
where
$0 \leq \constc' < \denom$
and
$\varx \eqdiv{\denom} \constc$
appears directly in $\pres$ or in the expansion of the divisibility constraints of $\pres$.
This set will be used in the next sections.

\subsection{Soundness}
\label{sec:mondec-soundness}

We show that if
\[
    \forall \varx_1, \varx_2 \geq \bound .
        \forall \vecy .~~
            \ap{\samediv}{\varx_1, \varx_2, \vecy}
            \Rightarrow
            \brac{
                \ap{\pres}{\varx_1, \vecy}
                \iff
                \ap{\pres}{\varx_2, \vecy}
            }
\]
then $\pres$ is decomposable on $\varx$.
We do this by constructing the decomposition.

Although there are doubly exponentially many subsets
$\divset \subseteq \alldivlong$,
there are only exponentially many \emph{maximal consistent} subsets.
We implicitly restrict $\divset$ to such subsets.
This is because, for any $\denom$, there is no value of $\varx$ such that
$\varx \eqdiv{\denom} \constc$
and
$\varx \eqdiv{\denom} \constc'$
both hold with
$\constc \neq \constc'$ but $\constc, \constc' \in \{0, \ldots, k-1\}$.
For any maximal consistent set
$\divset \subseteq \alldivlong$,
let $\minsat{\divset}$ be the smallest integer greater than or equal to $\bound$ satisfying all constraints in $\divset$.
Note, since $\divset$ is maximal, a value that satisfies all constraints in $\divset$ also does not satisfy an constraints not in $\divset$.
The number~$\minsat{\divset}$ can be represented using polynomially many bits.

We can now decompose $\pres$ into
\[
    \brac{\begin{array}{c}
        \brac{\varx = 0 \land \ap{\pres}{0, \vecy}} \\
        \lor \cdots \lor \\
        \brac{\varx = \bound - 1 \land \ap{\pres}{\bound - 1, \vecy}}
    \end{array}}
    \lor
    \bigvee\limits_{
        \divset \subseteq \alldivlong
    }\brac{
        \varx \geq \bound
        \land
        \bigwedge\limits_{
            \varx \eqdiv{\denom} \constc
            \in
            \divset
        }
            \varx \eqdiv{\denom} \constc
        \land
        \ap{\pres}{\minsat{\divset}, \vecy}
    } \ .
\]
This formula is exponential in the size of $\pres$
if $D$ only ranges over the maximal consistent
subsets of $\alldivlong$.
For values of $\varx$ less than $\bound$, equivalence with the original formula is immediate.
For larger values, we use the fact that, from our original assumption, for any values $\varx_1$ and $\varx_2$ that satisfy the same divisibility constraints, we have
$\ap{\pres}{\varx_1, \vecy}$
iff
$\ap{\pres}{\varx_2, \vecy}$.
Hence, we can substitute the values $\minsat{\divset}$ in these cases.

\subsubsection{Example}

We return to
$\ap{\pres}{\varx, \vary, \varz} =
    \varx + 2 \vary \geq 5 \land \varz < 5 \land \varx \eqdiv{2} \vary$
and compute the decomposition on $\varx$.
Assuming $\bound$ is odd, the decomposition will be as follows.
In our presentation we slightly simplify the formula.
Strictly speaking
$\varx \eqdiv{2} \vary$
should be expanded to
$(\varx \eqdiv{2} 0 \land \vary \eqdiv{2} 0)
 \lor
 (\varx \eqdiv{2} 1 \land \vary \eqdiv{2} 1)$.
We simplify these to
$\vary \eqdiv{2} 0$
and
$\vary \eqdiv{2} 1$,
respectively, when instantiated with concrete values of $\varx$.
\[
    \begin{array}{c}
        \brac{
            x = 0
            \land
            \brac{0 + 2 \vary \geq 5 \land \varz < 5 \land \vary \eqdiv{2} 0}
        }
        \lor \\
        \brac{
            x = 1
            \land
            \brac{1 + 2 \vary \geq 5 \land \varz < 5 \land \vary \eqdiv{2} 1}
        } \\
        \lor \cdots \lor \\
        \brac{
            x = \bound - 1
            \land
            \brac{\bound - 1 + 2 \vary \geq 5 \land \varz < 5 \land \vary \eqdiv{2} 0}
        }
        \lor \\
        \brac{
            \brac{\varx \eqdiv{2} 0 \land \varx \geq \bound}
            \land
            \brac{\bound + 1 + 2 \vary \geq 5 \land \varz < 5 \land \vary \eqdiv{2} 0}
        }
        \lor \\
        \brac{
            \brac{\varx \eqdiv{2} 1 \land \varx \geq \bound}
            \land
            \brac{\bound + 2 \vary \geq 5 \land \varz < 5 \land \vary \eqdiv{2} 1}
        }
    \end{array}
\]

\subsection{Completeness}

We now show that every formula $\pres$ decomposable on $\varx$ satisfies
\[
    \forall \varx_1, \varx_2 \geq \bound .
        \forall \vecy .~~
            \ap{\samediv}{\varx_1, \varx_2, \vecy}
            \Rightarrow
            \brac{
                \ap{\pres}{\varx_1, \vecy}
                \iff
                \ap{\pres}{\varx_2, \vecy}
            } \ .
\]
We first show that some $\bound$ must exist.
Once the existence has been established, we can argue that it must be at most exponential in $\pres$.

\subsubsection{Existence of the Bound}

If
$\ap{\pres}{\varx, \vecy}$
is decomposable on $\varx$, then there is an equivalent formula
$\bigvee_{\idxi}
    \ap{\monpres_\idxi}{\varx}
    \land
    \ap{\varpres_\idxi}{\vecy}$.
It is known that every formula
$\ap{\monpres}{\varx}$
is satisfied by a finite union of arithmetic progressions
$\consta + \idxj \constb$.
Let $\bound$ be larger than the largest value of $\consta$ in the arithmetic progressions satisfying the
$\ap{\monpres_\idxi}{\varx}$.

We show when
$\ap{\samediv}{\varx_1, \varx_2, \vecy}$
then
$\ap{\pres}{\varx_1, \vecy}$
iff
$\ap{\pres}{\varx_2, \vecy}$
for all values $\varx_1, \varx_2 \geq \bound$ and $\vecy$.
Assume towards a contradiction that we have values
$\varx_1, \varx_2$
and a tuple of values $\vecy$ such that
$\ap{\samediv}{\varx_1, \varx_2, \vecy}$
and
$\ap{\pres}{\varx_1, \vecy}$, but not
$\ap{\pres}{\varx_2, \vecy}$.

Let $\denom$ be the product of all $\denom'$ appearing in some divisibility constraint
$x \eqdiv{\denom'} \constc$
in $\alldivlong$.
We know that there is some disjunct of the monadic decomposition such that
$\ap{\monpres}{\varx_1} \land \ap{\varpres}{\vecy}$
holds.
Moreover, let $\varx_1$ belong to the arithmetic progression
$\consta + \idxj \constb$.
Since
$\varx_1 \geq \bound > \consta$
we know that
$\ap{\monpres}{\varx'_1} \land \ap{\varpres}{\vecy}$
also holds for any
$\varx'_1 = \varx_1 + \idxj'\constb\denom$.
That is, we can pump $\varx_1$ by adding a multiple of $\constb\denom$, while staying in the same arithmetic progression and satisfying the same divisibility constraints.

Similarly, let $\constd$ be the product of all $\constb$ appearing in the (finite number of) arithmetic progressions that define the monadic decomposition of $\pres$, limited to disjuncts such that
$\ap{\varpres_\idxi}{\vecy}$
holds.
Since
$\ap{\pres}{\varx_2, \vecy}$
does not hold, then
$\ap{\pres}{\varx'_2, \vecy}$
also does not hold for any
$\varx'_2 = \varx_2 + \idxj\constd\denom$.
This means that we can pump $\varx_2$ staying outside of the arithmetic progressions defining permissible values of $\varx$ for the given values $\vecy$, whilst additionally satisfying the same divisibility constraints.

Now, for each value of $\varx'_1$ satisfying
$\ap{\pres}{\varx'_1, \vecy}$
we can consider the disjunctive normal form of $\pres$.
By expanding the divisibility constraints, a disjunct becomes a conjunction of terms of the form, where $\linearf$ represents some linear function on $\vecy$,
\begin{enumerate}
    \item
    $\consta \varx + \ap{\linearf}{\vecy} \leq \constc$
    or
    $\consta \varx + \ap{\linearf}{\vecy} \geq \constc$, or
    \item
    $\vary_\idxi \eqdiv{\denom'} \constc$
    or
    $\varx \eqdiv{\denom'} \constc$.
\end{enumerate}

Since there are infinitely many $\varx'_1$, we can choose one disjunct satisfied by infinitely many $\varx'_1$.
This means that for constraints of the form
$\consta \varx + \ap{\linearf}{\vecy} \leq \constc$
or
$\consta \varx + \ap{\linearf}{\vecy} \geq \constc$
with a non-zero $\consta$, then $\consta$ must be negative or positive respectively (or zero).
Otherwise, only a finite number of values of $\varx$ would be permitted.

We know that $\varx'_2$ and $\vecy$ do not satisfy the disjunct.
We argue that this is a contradiction by considering each term in turn.
Since there are infinitely many $\varx'_2$ we can assume without loss of generality that
$\varx'_2 > \varx'_1$.
\begin{enumerate}
    \item
    If
    $\consta \varx + \ap{\linearf}{\vecy} \leq \constc$
    (resp. $\consta \varx + \ap{\linearf}{\vecy} \geq \constc$)
    appears and is satisfied by $\varx'_1$, then $\consta$ must be negative or zero (resp. positive or zero) and $\varx'_2$ will also satisfy the atom.
    \item
    Atoms of the form $\vary_\idxi \eqdiv{\denom'} \constc$ do not distinguish values of $\varx$ and thus are satisfied for both $\varx'_1$ and $\varx'_2$.
    We cannot have
    $\varx'_1 \eqdiv{\denom'} \constc$
    but not
    $\varx'_2 \eqdiv{\denom'} \constc$
    since $\varx'_1$ and $\varx'_2$ satisfy the same divisibility constraints.
\end{enumerate}
Thus, it cannot be the case that $\varx'_1$ satisfies the disjunct, while $\varx'_2$ does not.
This is our required contradiction.
Hence, for all
$\varx_1, \varx_2 \geq \bound$
and $\vecy$ such that
$\ap{\samediv}{\varx_1, \varx_2, \vecy}$
it must be the case that
$\ap{\pres}{\varx_1, \vecy}$
iff
$\ap{\pres}{\varx_2, \vecy}$.
We have thus established the existence of a bound $\bound$.

\subsubsection{Size of the Bound}%
\label{sec:bound-size-mon}

We now argue that this bound is exponential in the size of $\pres$, and can thus be encoded in a polynomial number of bits.

Consider the formula that is essentially the negation of our property.
\[
    \ap{\badx}{\varx_1, \varx_2, \vecy}
    =
    \ap{\samediv}{\varx_1, \varx_2}
    \land
    \ap{\pres}{\varx_1, \vecy}
    \land
    \neg \ap{\pres}{\varx_2, \vecy} \ .
\]
There is some computable bound $\bound'$ exponential in the size of $\badx$ (and thus $\pres$) such that, if there exists
$\varx_1, \varx_2 \geq \bound'$
and some $\vecy$ such that
$\ap{\badx}{\varx_1, \varx_2, \vecy}$
holds, then there are infinitely many $\varx'_1$ and $\varx'_2$ such that for some $\vecy'$ we have that
$\ap{\badx}{\varx'_1, \varx'_2, \vecy'}$
holds.
An argument for the existence of this bound is given in
\shortlong{%
    the full version
}{%
    Appendix~\ref{sec:inf-sols}.
}
In short, we first convert the formula above into a disjunction of conjunctions of linear equalities, using a linear number of slack variables to encode inequalities and divisibility constraints.
Then, using a result of Chistikov and Haase~\cite{CH16}, we set
$\bound' = 2^{\constd\numof\varnumof + 3}$
where
    $\constd$ is the number of bits needed to encode the largest constant in the converted formula (polynomially related to the size of the formula above),
    $\numof$ is the maximum number of linear equalities in any disjunct, and
    $\varnumof$ is the number of variables (including slack variables).

Now, assume that the smallest $\bound$ is larger than $\bound'$.
That is
\[
    \forall \varx_1, \varx_2 \geq \bound .
        \forall \vecy .~~
            \ap{\samediv}{\varx_1, \varx_2, \vecy}
            \Rightarrow
            \brac{
                \ap{\pres}{\varx_1, \vecy}
                \iff
                \ap{\pres}{\varx_2, \vecy}
            }
\]
holds, but it does not hold that
\[
    \forall \varx_1, \varx_2 \geq \bound' .
        \forall \vecy .~~
            \ap{\samediv}{\varx_1, \varx_2, \vecy}
            \Rightarrow
            \brac{
                \ap{\pres}{\varx_1, \vecy}
                \iff
                \ap{\pres}{\varx_2, \vecy}
            }
\]
This implies there exists some
$\varx_1, \varx_2 \geq \bound'$
and $\vecy$ such that
$\ap{\badx}{\varx_1, \varx_2, \vecy}$
holds.
Thus, there are infinitely many such $\varx'_1$ and $\varx'_2$, contradicting the fact that all
$\varx'_1, \varx'_2 \geq \bound$
do not satisfy the property.
Thus, we take $\bound'$ as the value of $\bound$.
It is computable in polynomial time, exponential in size, and representable in a polynomial number of bits.


\section{Variadic Decomposability}
\label{sec:vardec}

We consider decomposition along several variables instead of just one.
In this section, we assume without loss of generality that $\pres$ is given in positive normal form and all (in)equalities rearranged into the form
$\sum_\idxi \consta_\idxi \varx_i \geq \constb$.
We may use negation $\neg \pres$ as a shorthand.
We require this form because later we use the set of all linear equations in the DNF of a formula.
Since negation alters the linear equations, it is more convenient to assume that negation has already been eliminated.

\subsection{$\Pi$-Decomposability}

As described in Section~\ref{sec:var-decomp-prelims}, we refine the notion of $\Pi$-decomposability to separate only a single set $Y_i$ in $\Pi = \{Y_1, \ldots, Y_n\}$.
Without loss of generality, we assume we are given a formula
$\ap{\pres}{\vecx, \vecy}$
and we separate the variables in $\vecx$ from $\vecy$.

In particular, given a formula
$\ap{\pres}{\vecx, \vecy}$
we aim to decompose the formula into
$
    \ap{\pres}{\vecx, \vecy}
    \equiv
    \bigvee\limits_{\idxj}
        \ap{\monpres_\idxj}{\vecx}
        \land
        \ap{\varpres_\idxj}{\vecy}
$
for some $\PresQF$ formulas $\monpres_\idxi$ and $\varpres_\idxi$.

\subsection{Decomposition}
\label{sec:variadic-decomposition}

We show that testing whether a given formula $\pres$ is variadic decomposable on $\vecx$ is in coNP.
This proves Theorem~\ref{thm:variadic} as the coNP lower bound follows from the monadic case.

\begin{lemma}[Decomposing on $\vecx$]
    Given a $\PresQF$ formula
    $\ap{\pres}{\vecx, \vecy}$
    there is a coNP algorithm to decide if $\pres$ is variadic decomposable on $\vecx$.
    Moreover, if a decomposition exists, it can be constructed in exponential-time and is exponential in size.
\end{lemma}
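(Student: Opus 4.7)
The plan is to lift the bound-and-equivalence machinery of the monadic case (Section~\ref{sec:monadicUpper}) from a single variable $\varx$ to a tuple $\vecx$. The key ingredient is a generalised relation $\samediv(\vecx_1, \vecx_2, \vecy)$ on pairs of tuples that requires: (i)~for every divisibility atom of $\pres$ and every $x_i \in \vecx$, $(\vecx_1)_i$ and $(\vecx_2)_i$ have the same residue modulo the relevant $\denom$; (ii)~every atom of $\pres$ involving only $\vecx$-variables has the same truth value on $\vecx_1$ and $\vecx_2$; and (iii)~for every linear form $L_x$ appearing as the $\vecx$-part of a mixed atom $L_x(\vecx) + L_y(\vecy) \sim \constc$ of $\pres$, $L_x(\vecx_1) = L_x(\vecx_2)$. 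I would then claim the characterisation
\[
    \forall \vecx_1, \vecx_2 \geq \bound.\ \forall \vecy.~
      \samediv(\vecx_1,\vecx_2,\vecy)
      \Rightarrow
      \bigl(\pres(\vecx_1,\vecy)\iff\pres(\vecx_2,\vecy)\bigr)
\]
(with the bound taken componentwise) for some $\bound$ exponential in $|\pres|$.

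The three-part proof of this characterisation follows the monadic template. For \emph{soundness}, assuming the condition, I would construct a decomposition by: (a)~enumerating all tuples $\vec{v}$ with some component below $\bound$, contributing a disjunct $\vecx = \vec{v} \land \pres(\vec{v}, \vecy)$; (b)~for the ``large'' region, enumerating the equivalence classes $\divset$ induced by $\samediv$ on tuples with every component at least $\bound$, choosing a canonical representative $\vecc_\divset$ per class, and adding a disjunct $\isreg{\divset}(\vecx) \land \pres(\vecc_\divset, \vecy)$; each disjunct has the required shape $\monpres_\idxj(\vecx) \land \varpres_\idxj(\vecy)$. For \emph{completeness}, given a putative decomposition $\bigvee_\idxi \monpres_\idxi(\vecx) \land \varpres_\idxi(\vecy)$, I would adapt the pumping argument: each $\monpres_\idxi$ defines a semilinear set in $\N^{|\vecx|}$, so one can pump a candidate counterexample $(\vecx_1,\vecx_2,\vecy)$ along a basis of period vectors chosen to simultaneously preserve membership in the relevant $\monpres_\idxi$ cell, the residues modulo every $\denom$, and the values of every tracked $L_x$; this reduces to a contradiction exactly as in Section~\ref{sec:mondec-soundness}. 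For the \emph{bound size}, as in Section~\ref{sec:bound-size-mon}, I would rewrite $\samediv(\vecx_1,\vecx_2,\vecy) \land \pres(\vecx_1, \vecy) \land \neg \pres(\vecx_2, \vecy)$ as a disjunction of linear-equality systems via polynomially many slack variables and apply the Chistikov--Haase small-solution bound to obtain an exponential $\bound$ representable with polynomially many bits.

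The coNP upper bound then follows because non-decomposability is witnessed by the polynomial-size existential Presburger formula
$\exists \vecx_1, \vecx_2 \geq \bound.\, \exists \vecy.\,
  \samediv(\vecx_1,\vecx_2,\vecy) \land \pres(\vecx_1,\vecy) \land \neg\pres(\vecx_2,\vecy)$,
and the decomposition above has exponential size (exponentially many small tuples plus exponentially many equivalence classes). The main obstacle is the design of $\samediv$ and the accompanying pumping step: a linear form over $\vecx$ can take arbitrary integer values even when every coordinate of $\vecx$ is large, because of cancellation between opposite-sign coefficients, so the ``atoms stabilise in the large region'' intuition from the monadic case fails. The equivalence is consequently forced to track the exact values of all mixed-atom linear forms, and the pumping directions must lie in the common kernel of those forms while respecting the required residues; the technical crux is to verify that this kernel is still rich enough to carry the completeness argument, and that the number of equivalence classes (and hence the decomposition) remains only exponential in $|\pres|$.
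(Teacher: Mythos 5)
You correctly identify the central difficulty---a linear form $\linearf(\vecx)$ occurring in a mixed atom can remain small, or take any value, even when every coordinate of $\vecx$ is large---but the fix you propose breaks the characterisation. If $\samediv(\vecx_1,\vecx_2,\vecy)$ forces exact equality $\linearf(\vecx_1)=\linearf(\vecx_2)$ for \emph{every} mixed-atom form, matching residues for every component, and matching truth values for every pure-$\vecx$ atom, then \emph{every} atom of $\pres$ takes the same truth value on $(\vecx_1,\vecy)$ and $(\vecx_2,\vecy)$, so the implication $\samediv \Rightarrow (\pres(\vecx_1,\vecy)\iff\pres(\vecx_2,\vecy))$ is a tautology, independent of $\pres$ and of $\bound$. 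Your test would therefore declare every formula decomposable; for instance $x_1+x_2=y$ passes it but is not decomposable on $\{x_1,x_2\}$ versus $\{y\}$, since each value of $x_1+x_2$ selects a different singleton set of $y$'s. The same defect surfaces in your construction: the equivalence classes of your $\samediv$ on the large region are indexed by the exact values of the forms $\linearf(\vecx)$, of which there are infinitely many, so step~(b) is not a finite (let alone exponential) disjunction. Step~(a) is also not finite: with a componentwise bound and $|\vecx|\geq 2$, the set of tuples having \emph{some} component below $\bound$ is infinite, so it cannot be enumerated as in the monadic case.

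The missing idea is the paper's region machinery: the bounded/unbounded status of each linear form must itself be part of the case split. The paper quantifies over all partitions $\regmap=(\regub,\regeq)$ of the set $\xfs$ of $\vecx$-parts of atoms, restricts to tuples with $|\linearf(\vecx)|\geq\bound$ for $\linearf\in\regub$ and $|\linearf(\vecx)|<\bound$ for $\linearf\in\regeq$, and requires exact equality $\linearf(\vecx_1)=\linearf(\vecx_2)$ \emph{only} for the bounded forms. This keeps the number of equivalence classes exponential (each bounded form ranges over fewer than $2\bound$ values), while a pumping argument with auxiliary variables tracking $|\linearf(\vecx)|$, combined with the Chistikov--Haase bound you cite, shows that the unbounded forms cannot distinguish $\vecx_1$ from $\vecx_2$ once both exceed the bound. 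Without this device (or an equivalent one), the soundness direction of your characterisation fails and your decomposition is infinite; the rest of your outline (the coNP test via a polynomial-size existential formula, the slack-variable reduction for the bound) does match the paper once the regions are in place.
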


Let $\xfs$ be the set of all $\linearf$ such that
$\ap{\linearf}{\vecx} + \ap{\linearg}{\vecy} \geq \constb$
is a linear inequality appearing in $\pres$.
Our approach will divide the points of $\vecx$ into regions where all points within a region can be paired with the same values of $\vecy$ to satisfy the formula.
These regions are given by a bound $\bound$.
If
$\ap{\linearf}{\vecx}$
is within the bound, then two points $\vecx_1$ and $\vecx_2$ are in the same region if
$\ap{\linearf}{\vecx_1} = \ap{\linearf}{\vecx_2}$.
If two points are outside the bound, then by a pumping argument we can show that we have
$\ap{\pres}{\vecx_1, \vecy}$
iff
$\ap{\pres}{\vecx_2, \vecy}$.

Let
$\regmap = \tup{\regub, \regeq}$
be a partition of $\xfs$ into unbounded and bounded functions
(where $\regeq$ refers to equality being asserted over bounded functions as shown below).
Define for each
$\regmap = \tup{\regub, \regeq}$
\[
    \ap{\isreg{\regmap}}{\vecx_1, \vecx_2}
    \eqdef
    \brac{
        \bigwedge\limits_{\linearf \in \regeq}
            \ap{\linearf}{\vecx_1} = \ap{\linearf}{\vecx_2}
    }
    \ .
\]
Note, this formula intentionally does not say anything about the unbounded functions.
This is important when we need to derive a bound---such a derivation cannot use a pre-existing bound.

We also need to extend $\samediv$ to account for $\vecx_1$ and $\vecx_2$ being vectors.
This is a straightforward extension asserting that each variable in $\vecx_1$ satisfies the same divisibility constraints as its counterpart in $\vecx_2$.
Again, let $\alldivshort$ be the set of all divisibility constraints
$\consta \varz_1 \eqdiv{\denom} \constb \varz_2$
appearing (syntactically) in $\pres$.
Let $\varx_\idxi$, $\varx^1_\idxi$ and $\varx^2_\idxi$ denote the $\idxi$th variable of $\vecx$, $\vecx_1$, and $\vecx_2$ respectively.
Assume without loss of generality that variables in $\vecx$ always either appear on the left-hand side of a divisibility constraint (i.e.\ in the $\varz_1$ position) or on both sides.
Define
\begin{multline*}
    \ap{\samediv}{\varx_1, \varx_2, \vecy}
    = \\
    \hspace{-2ex}
    \bigwedge\limits_{\substack{
        \consta \varx_\idxi \eqdiv{\denom} \constb \varz
        \in
        \alldivshort, \\
        \varz \neq \varx_\idxj
    }}
    \hspace{-1ex}
    \brac{\begin{array}{c}
        \brac{\consta \varx^1_\idxi \eqdiv{\denom} \constb \varz} \\
        \iff \\
        \brac{\consta \varx^2_\idxi \eqdiv{\denom} \constb \varz}
    \end{array}}
    \land
    \hspace{-2ex}
    \bigwedge\limits_{\substack{
        \consta \varx_\idxi \eqdiv{\denom} \constb \varx_\idxj \\
        \in \\
        \alldivshort
    }}
    \hspace{-1ex}
    \brac{\begin{array}{c}
        \brac{\consta \varx^1_\idxi \eqdiv{\denom} \constb \varx^1_\idxj} \\
        \iff \\
        \brac{\consta \varx^2_\idxi \eqdiv{\denom} \constb \varx^2_\idxj}
    \end{array}}
    \land
    \hspace{-1ex}
    \bigwedge\limits_{\substack{
        \varx_\idxi \eqdiv{\denom} \constc \\
        \in \\
        \alldivshort
    }}\brac{\begin{array}{c}
        \brac{\varx^1_\idxi \eqdiv{\denom} \constc} \\
        \iff \\
        \brac{\varx^2_\idxi \eqdiv{\denom} \constc}
    \end{array}}
    \ .
\end{multline*}

Next, we introduce an operator for comparing a vector of variables with a bound.
For
$\consta \in \ints$
let
$\abs{\consta}$
denote the absolute value of $\consta$.
Given a bound $\bound$ and some
$\regmap = \tup{\regub, \regeq}$
let
\[
    \brac{\vecx \reggeq{\regmap} \bound}
    \eqdef
    \bigwedge\limits_{\linearf \in \regub}
        \abs{\ap{\linearf}{\vecx}} \geq \bound
    \land
    \bigwedge\limits_{\linearf \in \regeq}
        \abs{\ap{\linearf}{\vecx}} < \bound \ .
\]
We claim there is an exponential bound $\bound$ such that $\pres$ is variadic decomposable iff for all $\regmap$ we have
\begin{equation} \label{eq:decomptest} \tag{DC-$\regmap$}
    \forall \vecx_1, \vecx_2 \reggeq{\regmap} \bound\ .\ %
        \forall \vecy\ .\ %
            \brac{\begin{array}{c}
                \ap{\isreg{\regmap}}{\vecx_1, \vecx_2} \\
                \land \\
                \ap{\samediv}{\vecx_1, \vecx_2, \vecy}
            \end{array}}
            \Rightarrow
            \brac{\begin{array}{c}
                \ap{\pres}{\vecx_1, \vecy} \\
                \iff \\
                \ap{\pres}{\vecx_2, \vecy}
            \end{array}}
\end{equation}
Note, unsatisfiability can be tested in NP.
First guess $\regmap$, then guess $\vecx_1, \vecx_2, \vecy$.

We prove soundness of the claim in the next section.
Completeness is an extension of the argument for the monadic case and is given in
\shortlong{%
    the full version.
}{%
    Appendix~\ref{sec:variadic-completeness}.
}
In the monadic case, we were able to take some values of
$\varx_1, \varx_2 > \bound$
such that both satisfied the same divisibility constraints, but one value satisfied the formula while the other did not.
Since these values were large, we derived an infinite number of such value pairs with increasing values.
We then used these growing solutions to show that it was impossible for the value of $\varx_1$ to satisfy the formula, while the value of $\varx_2$ does not, as they were both beyond the distinguishing power of the linear inequalities.
The argument for the variadic case is similar, with the values of $\varx_1$ and $\varx_2$ being replaced by the values of
$\ap{\linearf}{\vecx_1}$
and
$\ap{\linearf}{\vecx_2}$.

\subsection{Soundness}

Assume there is an exponential bound $\bound$ such that for each $\regmap$, Equation~\ref{eq:decomptest} holds.
We show how to produce a decomposition.

As in the monadic case (Section~\ref{sec:mondec-soundness}), let $\alldivlong$ be the set of all constraints of the form
$\vecx_\idxi \eqdiv{\denom} \constc$
in the expansion of the divisibility constraints of $\pres$.
Observe again that there are only exponentially many maximal consistent subsets
$\divset \subseteq \alldivlong$.
For each $\divset$ fix a vector of values $\vecc_\divset$ that satisfies all constraints in $\divset$ and is encodable in a polynomial number of bits.
Furthermore, we define
\[
    \ap{\divsat{\divset}}{\vecz}
    \eqdef
    \bigwedge\limits_{
        \varx_\idxi \eqdiv{\denom} \constc
        \in \divset
    }
        \varz_\idxi \eqdiv{\denom} \constc
    \ .
\]

For each $\regmap$ and $\divset$ we can define an equivalence relation over values of $\vecx$ such that
$\vecx \reggeq{\regmap} \bound$
and
$\ap{\divsat{\divset}}{\vecx}$.
\[
    \brac{\vecx_1 \regequiv{\regmap}{\divset} \vecx_2}
    \eqdef
    \brac{
        \vecx_1 \reggeq{\regmap} \bound
        \land
        \vecx_2 \reggeq{\regmap} \bound
        \land
        \ap{\isreg{\regmap}}{\vecx_1, \vecx_2}
        \land
        \ap{\divsat{\divset}}{\vecx_1}
        \land
        \ap{\divsat{\divset}}{\vecx_2}
    } \ .
\]
Observe each equivalence relation has an exponential number of equivalence classes depending on the values of the bounded $\linearf$.
Let
$\eqreps{\regmap}{\divset}$
be a set of minimal representatives from each equivalence class such that each representative is representable in a polynomial number of bits.
These can be computed by solving an existential Presburger constraint for each set of values of the bounded $\linearf$.
In particular, for each
$\regmap = \tup{\regub, \regeq}$
and assignments
$\abs{\constc_\linearf} < \bound$
for each
$\linearf \in \regeq$, we select a solution to the equation
\[
    \vecx \reggeq{\regmap} \bound
    \land
    \bigwedge\limits_{\linearf \in \regeq}
        \ap{\linearf}{\vecx} = \constc_\linearf
    \land
    \ap{\divsat{\divset}}{\vecx}
\]
if such a solution exists.
If no such solution exists, the assignment can be ignored.

The decomposition is
\[
    \bigvee\limits_{\regmap}
        \bigvee\limits_{\divset}
            \bigvee\limits_{\vecc \in \eqreps{\regmap}{\divset}} \brac{
                \vecx \reggeq{\regmap} \bound
                \land
                \ap{\isreg{\regmap}}{\vecx, \vecc}
                \land
                \ap{\divsat{\divset}}{\vecx}
                \land
                \ap{\pres}{\vecc, \vecy}
            } \ .
\]
The correctness of this decomposition follows from the Equations~\ref{eq:decomptest}.
For any values $\vecc_{\vecx}$ and $\vecc_{\vecy}$ of $\vecx$ and $\vecy$, first assume
$\ap{\pres}{\vecc_{\vecx}, \vecc_{\vecy}}$
holds.
Since there is some disjunct in the decomposition for which it holds that
$\vecc_{\vecx} \reggeq{\regmap} \bound
 \land
 \ap{\isreg{\regmap}}{\vecc_{\vecx}, \vecc}
 \land
 \ap{\divsat{\divset}}{\vecx}$
then, by applying Equation~\ref{eq:decomptest} we get
$\ap{\pres}{\vecc, \vecc_{\vecy}}$
as required.
Conversely, if some disjunct of the decomposition holds, we can apply Equation~\ref{eq:decomptest} and obtain
$\ap{\pres}{\vecc_{\vecx}, \vecc_{\vecy}}$.

\section{Applications of Decomposition}
\label{sec:app}

\subsection{Monadic Decomposition in String Solving}

The development of effective techniques for solving string constraints
has received a lot of attention over the last years, motivated by
applications ranging from program
verification~\cite{Abdulla14,philipp-survey} and security
analysis~\cite{Berkeley-JavaScript,S3} to the analysis of access policies
of cloud services~\cite{DBLP:conf/fmcad/BackesBCDGLRTV18}. Strings
give rise to a rich theory that may combine, depending on the
studied fragment,
(i)~word equations, i.e., equations over the free monoid generated by
some finite (but often large) alphabet,
(ii)~regular expression constraints,
(iii)~transduction, i.e., constraints described by finite-state
automata with multiple tracks,
(iv)~conversion functions, e.g.\ between integer variables and
strings encoding numbers in binary or decimal notation,
(v)~length constraints, i.e., arithmetic constraints on the length of
strings.

The handling of length constraints has turned out to be particularly
challenging in this context, both practically and theoretically.  Even
for the combination of word equations (or even just quadratic word
equations) with length constraints, decidability of the
(quantifier-free) theory is a long-standing open
problem~\cite{DBLP:conf/atva/LinM18}. At the same time, length
constraints are quite frequently used in applications; they are
needed, for instance, when encoding operations like \textsf{indexof}
or \textsf{substring}, or also when splitting a string into the parts
separated by some delimiter. In standard benchmark libraries
for string constraints, like the Kaluza
set~\cite{Berkeley-JavaScript}, benchmarks with length constraints
occur in large numbers.

The notion of monadic decomposition is in this setting important, since
any \emph{monadic} length constraint (in Presburger arithmetic) can be
reduced to a Boolean combination of regular expression constraints,
and is therefore easier to handle than the general case.

\begin{proposition}
  Satisfiability of a quantifier-free
  formula~$\phi = \phi_{\text{eq}} \wedge \phi_{\text{regex}} \wedge
  \phi_{\text{len}}$ consisting of word equations, regular expression
  constraints, and \emph{monadically decomposable} length constraints is
  decidable.
\end{proposition}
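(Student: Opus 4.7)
The plan is to reduce the problem to satisfiability of word equations together with regular expression constraints, which is already decidable (in fact PSPACE-complete, cf.\ \cite{J17,diekert}), by eliminating the length constraints entirely in favour of additional regular constraints on the string variables.

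First, I would apply Theorem~\ref{thm:mondec} to $\phi_{\text{len}}$, viewed as a quantifier-free Presburger formula over the length variables $|x_1|,\ldots,|x_n|$. Since $\phi_{\text{len}}$ is assumed monadically decomposable, this yields in exponential time an equivalent formula of the shape
\[
    \phi_{\text{len}} \equiv \bigvee_j \bigwedge_i \Delta_{j,i}(|x_i|),
\]
where each $\Delta_{j,i}$ is a quantifier-free Presburger formula with the single free variable $|x_i|$.

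Second, I would translate each unary atom $\Delta_{j,i}(|x_i|)$ into a regular expression constraint on the string $x_i$ itself. The key observation is that every unary Presburger-definable subset $S \subseteq \nats$ is semilinear, i.e., a finite union of arithmetic progressions, so the language $L_S = \{w \in \Sigma^* : |w| \in S\}$ is regular and effectively representable as a finite automaton of size at most exponential in the binary encoding of $S$. In particular, linear (in)equalities translate to threshold and interval constraints on word length, while divisibility atoms $\varx \eqdiv{\denom} \constc$ translate into cycles of length $\denom$. Substituting each $\Delta_{j,i}(|x_i|)$ by the corresponding regular membership constraint $x_i \in L_{\Delta_{j,i}}$ transforms $\phi_{\text{len}}$ into a Boolean combination $\phi'_{\text{regex}}$ of regular expression constraints.

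Finally, the resulting formula $\phi_{\text{eq}} \wedge \phi_{\text{regex}} \wedge \phi'_{\text{regex}}$ is a conjunction of word equations and a Boolean combination of regular constraints; using closure of regular languages under Boolean operations, the latter can be folded into a conjunction of standard regular membership constraints per variable, and decidability then follows from the known decidability of word equations with regular constraints. The main obstacle is essentially bookkeeping: one must check that the monadic decomposition provided by Theorem~\ref{thm:mondec} can indeed be rewritten into regular constraints whose sizes remain at most exponential, so that feeding them to the word-equation solver preserves decidability; the mathematical content reduces to the standard fact that unary Presburger and unary regular sets coincide.
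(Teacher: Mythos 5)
Your proposal is correct and follows essentially the same route as the paper: compute a monadic decomposition of $\phi_{\text{len}}$, convert each unary Presburger atom $\Delta(|w_i|)$ into a regular membership constraint via semilinearity of unary Presburger-definable sets, and conclude by the decidability of word equations with regular expression constraints. The extra bookkeeping you describe (folding the Boolean combination of regular constraints and tracking the exponential size bound) is sound but not needed for the bare decidability claim.
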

\begin{proof}
  Suppose $w_1, \ldots, w_n$ are the string variables occurring in
  $\phi$, and $|w_1|, \ldots, |w_n|$ the terms representing their
  length.  A decision procedure can first compute a monadic
  representation~$\phi'_{\text{len}}$ of $\phi_{\text{len}}$ over
  lengths~$|w_1|, \ldots, |w_n|$, and then turn each
  atom~$\Delta(|w_i|)$ in $\phi'_{\text{len}}$ into an equivalent
  regular membership constraint~$w_i \in {\cal L}_\Delta$. This is
  possible because the Presburger formula~$\Delta$ can be represented
  as a semi-linear set, which can directly be translated to a regular
  expression. Decidability follows from the decidability of word
  equations combined with regular expression
  constraints~\cite{diekert}.
  \qed
\end{proof}

\begin{table}[tb]
  \caption{Statistics about the Kaluza
    benchmarks~\cite{Berkeley-JavaScript}.  It should be noted (and is
    well-known~\cite{cvc4}) that the categories ``sat'' and ``unsat''
    do not (always) imply the status of the benchmarks, they only
    represent the way the benchmarks were organised by the Kaluza
    authors.}
  \label{tab:kaluza}

  \begin{center}
    \begin{tabular}{l*{4}{@{\qquad}c}}
      \multirow{2}{*}{\textbf{Folder}} & \multirow{2}{*}{\#Benchmarks}
      & Benchmarks & Decomposition & Decomposition
      \\
      &  & with \texttt{str.len} & checks & checks succeeded
      \\\hline
      sat/small & 19804 & 2185 & 2183 & 2155
      \\\hline
      sat/big & 1741 & 1318 & 1317 & 56
      \\\hline
      unsat/small & 11365 & 3910 & 2919 & 2919
      \\\hline
      unsat/big & 14374 & 13813 & 6786 & 3362
      \\\hline
      \textbf{Total} & 47284 & 21226 & 13205 & 8492
    \end{tabular}
  \end{center}
\end{table}

This motivates the use of monadic decomposition as a standard
pre-processing step in string solvers, transforming away those length
constraints that can be turned into monadic form. To evaluate the
effectiveness of such an optimisation, we implemented the
decomposition check defined in Section~\ref{sec:monadicUpper}, and
used it within the string SMT solver OSTRICH~\cite{CHLRW18} to
determine the number of Kaluza benchmarks with monadic
decomposable length constraints.\footnote{Branch ``modec'' of
 \url{https://github.com/uuverifiers/ostrich}, which also contains
detailed logs of the experiments.}
The results are summarised in
Table~\ref{tab:kaluza}:
\begin{itemize}
\item Of altogether 47\,284 benchmarks, 21\,226 contain the
  \texttt{str.len} function, and therefore length constraints. This
  number was determined by a simple textual analysis of the
  benchmarks.
\item Running our decomposition check in OSTRICH, in 13\,205 of the
  21\,226 cases length constraints were found that could be
  analysed. The remaining 8\,021 problems were proven unsatisfiable
  without ever reaching the string theory solver in OSTRICH, i.e., as
  a result of pre-processing the input formula, or because Boolean
  reasoning discovered obvious inconsistencies in the problems.
\item In 8\,492 of the 13\,205 cases, all analysed length constraints
  were found to be monadically decomposable; 4\,713 of the benchmarks
  contained length constraints that could not be decomposed.
\end{itemize}

This means that 42\,571 of the Kaluza benchmarks (slightly more than
90\%) do in principle not require support for length constraints in a
string solver, either because there are no length constraints, or
because length constraints can be decomposed and then turned into
regular expression constraints.

Even with a largely unoptimised implementation, the time required to
check whether length constraints can be decomposed was negligible in
case of the Kaluza benchmarks, with the longest check requiring
2.1~seconds (on an AMD Opteron 2220 SE machine). The maximum number of
variables in a length constraint was 140.

\subsection{Variadic Decomposition in Quantifier Elimination}

A second natural application of decomposition is \emph{quantifier
  elimination,} i.e., the problem of deriving an equivalent
quantifier-free formula~$\phi'$ for a given formula~$\phi$ with
quantifiers.  In Presburger arithmetic, for a
formula~$\phi = \exists x_1, \ldots, x_n.\, \psi$ with $n$
quantifiers but no quantifier alternations, quantifier elimination in
the worst case causes a doubly-exponential increase in formula
size~\cite{Weis97}.

Variadic decomposition can be used to eliminate quantifiers with a
smaller worst-case increase in size, provided that the matrix of a
quantifier formula can be decomposed. Suppose
$\phi = \exists \bar x.\, \psi(\bar x, \bar y)$ is given and
$\psi$ is variadic decomposable on $\bar x$, i.e.,
\begin{equation*}
  \ap{\psi}{\bar x, \bar y}
  ~\equiv~
  \bigvee\limits_{\idxj}
  \ap{\monpres_\idxj}{\bar x}
  \land
  \ap{\varpres_\idxj}{\bar y}
\end{equation*}
This means that the existential quantifiers can be distributed over
the disjunction, and their elimination turns into a simpler
satisfiability check:
\begin{equation*}
  \exists \bar x.\, \ap{\psi}{\bar x, \bar y}
  ~\equiv~
  \bigvee\limits_{\idxj}
  \exists \bar x.\,
  \ap{\monpres_\idxj}{\bar x}
  \land
  \ap{\varpres_\idxj}{\bar y}
  ~\equiv~
  \bigvee\limits_{\idxj:~  \ap{\monpres_\idxj}{\bar x} \text{~is sat}}
  \ap{\varpres_\idxj}{\bar y}
\end{equation*}
Universal quantifiers can be handled in a similar way by negating the
matrix first.

\begin{proposition}
  Take a
  formula~$\phi(\bar y) = \exists \bar x.\, \psi(\bar x, \bar y)$ in
  Presburger arithmetic in which $\psi$ is quantifier-free and
  variadic decomposable on $\bar x$. Then there is a quantifier-free
  formula~$\phi'(\bar y)$ that is equivalent to $\phi$ and at
  most singly-exponentially bigger than $\phi$.
\end{proposition}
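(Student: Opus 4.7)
The plan is to directly invoke Theorem~\ref{thm:variadic} on the matrix~$\psi$ and then push the existential block inside the resulting disjunction, since each disjunct syntactically separates~$\bar x$ from~$\bar y$.

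First I would apply Theorem~\ref{thm:variadic} to $\ap{\psi}{\bar x, \bar y}$ with the binary partition $\Pi = \{X, Y\}$ (where $X, Y$ collect the variables in $\bar x, \bar y$ respectively). By that theorem, since $\psi$ is assumed variadic decomposable on $\bar x$, we obtain in exponential time an explicit decomposition
\begin{equation*}
    \ap{\psi}{\bar x, \bar y}
    ~\equiv~
    \bigvee_{\idxj}
        \ap{\monpres_\idxj}{\bar x}
        \land
        \ap{\varpres_\idxj}{\bar y}
\end{equation*}
whose total size is at most singly-exponential in $|\psi|$, with each $\monpres_\idxj$ and $\varpres_\idxj$ a quantifier-free Presburger formula.

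Next I would observe that the existential quantifier block distributes over the disjunction, and since $\monpres_\idxj$ does not mention~$\bar y$ and $\varpres_\idxj$ does not mention~$\bar x$, we can pull $\varpres_\idxj$ out of the existential scope:
\begin{equation*}
    \exists \bar x.\, \ap{\psi}{\bar x, \bar y}
    ~\equiv~
    \bigvee_{\idxj}
        \ap{\varpres_\idxj}{\bar y}
        \land
        \exists \bar x.\, \ap{\monpres_\idxj}{\bar x}~.
\end{equation*}
Each inner formula $\exists \bar x.\, \ap{\monpres_\idxj}{\bar x}$ is a closed existential Presburger sentence, hence either $\mathit{true}$ or $\mathit{false}$; we decide each one by a standard satisfiability check (in NP, and certainly within exponential time in $|\psi|$). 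Dropping the disjuncts whose $\monpres_\idxj$ is unsatisfiable yields the desired quantifier-free formula
\begin{equation*}
    \ap{\pres'}{\bar y} ~\eqdef~
    \bigvee_{\idxj:~\ap{\monpres_\idxj}{\bar x}\text{ is sat}}
        \ap{\varpres_\idxj}{\bar y}~.
\end{equation*}

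The size bound is immediate: $\pres'$ is a subdisjunction of the decomposition above, so $|\pres'| \leq 2^{O(|\psi|)}$, i.e., at most singly-exponential in $|\phi|$. The only nontrivial ingredient is Theorem~\ref{thm:variadic}, whose algorithm supplies the singly-exponential decomposition; everything else---distribution of $\exists$ over $\vee$, hoisting $\ap{\varpres_\idxj}{\bar y}$ outside the quantifier, and the per-disjunct satisfiability check---is routine. No serious obstacle arises, since by the time we are invoking this proposition the hard work of bounding the witnesses and constructing $\monpres_\idxj, \varpres_\idxj$ has already been done in Section~\ref{sec:vardec}.
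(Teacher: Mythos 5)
Your proposal is correct and follows essentially the same route as the paper: invoke the exponential-size variadic decomposition of $\psi$ on $\bar x$ from Theorem~\ref{thm:variadic}, distribute the existential block over the disjunction, and discard the disjuncts whose $\monpres_\idxj$ is unsatisfiable, leaving $\bigvee_{\idxj:\ \monpres_\idxj \text{ sat}} \varpres_\idxj(\bar y)$. The size bound and the reduction of quantifier elimination to per-disjunct satisfiability checks match the paper's argument exactly.
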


Checking whether a formula can be decomposed is therefore a simple
optimisation that can be added to any quantifier elimination
procedure for Presburger arithmetic.


\section{Conclusion and Future Work}
\label{sec:conclusion}

We have shown that the monadic and variadic decomposability problem for $\PresQF$ is coNP-complete.
Moreover, when a decomposition exists, it is at most exponential in size and can be computed in exponential time.
This formula size is tight for decompositions presented in either disjunctive or conjunctive normal form.

We gave two applications of our results.
The first was in string constraint solving.
In program analysis, string constraints are often mixed with numerical constraints on the lengths of the strings (for example, via the \textsf{indexOf} function).
Length constraints significantly complicate the analysis of strings.
However, if the string constraints permit a monadic decomposition, they may be reduced to regular constraints and thus eliminated.
We analysed the well-known Kaluza benchmarks and showed that less than 10\% of the benchmarks contained length constraints that could not be decomposed.

For the second application, we showed that the doubly exponential blow-up caused by quantifier elimination can be limited to a singly exponential blow up whenever the formula is decomposable on the quantified variables.
Thus, variadic decomposition can form an optimisation step in a quantifier elimination algorithm.

Interesting problems are opened up by our results. It would be interesting to
study lower bounds for 
general boolean formulas. If smaller decompositions are possible, they would be 
useful for applications in string solving.

\OMIT{
First, while we show that decompositions in DNF or CNF are necessarily exponential, it is not clear whether the lower bound also holds for decompositions given as arbitrary formulas.
This is because divisibility constraints and the Chinese Remainder Theorem may be used to avoid the exponential blow-up in our counter example.
}

Second, we may consider variadic decomposition where a partition $\Pi$ is not
given as part of the input. 
Instead, one must check whether a $\Pi$-decomposition exists for some non-trivial $\Pi$.
This variant of the problem has a simple $\Sigma^P_2$ algorithm that first guesses some $\Pi$ and then verifies $\Pi$-decomposability.
However, the only known lower bound is coNP, which follows the same argument as
monadic decomposability. A better algorithm would not improve the
worst-case complexity for our quantifier elimination application, but it might 
provide a way to quickly identify a subset of a block of quantifiers that can
be eliminated quickly with $\Pi$-decompositions.


\paragraph*{Acknowledgments}

We thank Christoph Haase, Leonid Libkin, and Pascal Bergstr\"a{\ss}er for their 
help during the preparation of this work. 
Matthew Hague is supported by EPSRC [EP/T00021X/1]. 
Anthony Lin is supported by the European Research Council (ERC) under the 
European
Union's Horizon 2020 research and innovation programme (grant agreement no
    759969), and
by Max-Planck Fellowship.
Philipp R\"ummer is supported by the
Swedish Research Council (VR) under grant 2018-04727, and by the
Swedish Foundation for Strategic Research (SSF) under the project
WebSec (Ref.\ RIT17-0011).
Zhilin Wu is partially supported by the NSFC grant No. 61872340, Guangdong Science and Technology Department grant (No. 2018B010107004),  and the INRIA-CAS joint research project VIP.

\bibliographystyle{plain}
\bibliography{references}

\begin{thebibliography}{10}

\bibitem{trau18}
Parosh~Aziz Abdulla, Mohamed~Faouzi Atig, Yu{-}Fang Chen, Bui~Phi Diep,
  Luk{\'{a}}s Hol{\'{\i}}k, Ahmed Rezine, and Philipp R{\"{u}}mmer.
\newblock {TRAU:} {SMT} solver for string constraints.
\newblock In {\em Formal Methods in Computer Aided Design, {FMCAD 2018}}, 2018.

\bibitem{Abdulla14}
Parosh~Aziz Abdulla, Mohamed~Faouzi Atig, Yu{-}Fang Chen, Luk{\'{a}}s
  Hol{\'{\i}}k, Ahmed Rezine, Philipp R{\"{u}}mmer, and Jari Stenman.
\newblock String constraints for verification.
\newblock In {\em Computer Aided Verification - 26th International Conference,
  {CAV} 2014}, pages 150--166. Springer, 2014.

\bibitem{AGS18}
Roberto Amadini, Graeme Gange, and Peter~J. Stuckey.
\newblock Sweep-based propagation for string constraint solving.
\newblock In {\em Proceedings of the Thirty-Second {AAAI} Conference on
  Artificial Intelligence, (AAAI-18), the 30th innovative Applications of
  Artificial Intelligence (IAAI-18), and the 8th {AAAI} Symposium on
  Educational Advances in Artificial Intelligence (EAAI-18), New Orleans,
  Louisiana, USA, February 2-7, 2018}, pages 6557--6564, 2018.

\bibitem{DBLP:conf/fmcad/BackesBCDGLRTV18}
John Backes, Pauline Bolignano, Byron Cook, Catherine Dodge, Andrew Gacek,
  Kasper~S{\o}e Luckow, Neha Rungta, Oksana Tkachuk, and Carsten Varming.
\newblock Semantic-based automated reasoning for {AWS} access policies using
  {SMT}.
\newblock In Nikolaj Bj{\o}rner and Arie Gurfinkel, editors, {\em 2018 Formal
  Methods in Computer Aided Design, {FMCAD} 2018, Austin, TX, USA, October 30 -
  November 2, 2018}, pages 1--9. {IEEE}, 2018.

\bibitem{BHLLN19}
Pablo Barcel{\'{o}}, Chih{-}Duo Hong, Xuan~Bach Le, Anthony~W. Lin, and Reino
  Niskanen.
\newblock Monadic decomposability of regular relations.
\newblock In {\em 46th International Colloquium on Automata, Languages, and
  Programming, {ICALP} 2019, July 9-12, 2019, Patras, Greece}, pages
  103:1--103:14, 2019.

\bibitem{Z3-str3}
Murphy Berzish, Vijay Ganesh, and Yunhui Zheng.
\newblock Z3str3: {A} string solver with theory-aware heuristics.
\newblock In {\em 2017 Formal Methods in Computer Aided Design, {FMCAD} 2017,
  Vienna, Austria, October 2-6, 2017}, pages 55--59. {IEEE}, 2017.

\bibitem{buchi}
J~Richard B{\"u}chi and Steven Senger.
\newblock Definability in the existential theory of concatenation and
  undecidable extensions of this theory.
\newblock In {\em The Collected Works of J. Richard B{\"u}chi}, pages 671--683.
  Springer, 1990.

\bibitem{CCG06}
Olivier Carton, Christian Choffrut, and Serge Grigorieff.
\newblock Decision problems among the main subfamilies of rational relations.
\newblock {\em {ITA}}, 40(2):255--275, 2006.

\bibitem{CHLRW18}
Taolue Chen, Matthew Hague, Anthony~W. Lin, Philipp R\"ummer, and Zhilin Wu.
\newblock Decision procedures for path feasibility of string-manipulating
  programs with complex operations.
\newblock {\em CoRR}, abs/1811.03167, 2018.

\bibitem{CH16}
Dmitry Chistikov and Christoph Haase.
\newblock {The Taming of the Semi-Linear Set}.
\newblock In Ioannis Chatzigiannakis, Michael Mitzenmacher, Yuval Rabani, and
  Davide Sangiorgi, editors, {\em 43rd International Colloquium on Automata,
  Languages, and Programming (ICALP 2016)}, volume~55 of {\em Leibniz
  International Proceedings in Informatics (LIPIcs)}, pages 128:1--128:13,
  Dagstuhl, Germany, 2016. Schloss Dagstuhl--Leibniz-Zentrum fuer Informatik.

\bibitem{symbolic-transducer-power}
Loris D'Antoni and Margus Veanes.
\newblock The power of symbolic automata and transducers.
\newblock In {\em Computer Aided Verification - 29th International Conference,
  {CAV} 2017, Heidelberg, Germany, July 24-28, 2017, Proceedings, Part {I}},
  pages 47--67, 2017.

\bibitem{florin-sat}
Joel~D. Day, Thorsten Ehlers, Mitja Kulczynski, Florin Manea, Dirk Nowotka, and
  Danny~B{\o}gsted Poulsen.
\newblock On solving word equations using {SAT}.
\newblock In {\em Reachability Problems - 13th International Conference, {RP}
  2019, Brussels, Belgium, September 11-13, 2019, Proceedings}, pages 93--106,
  2019.

\bibitem{diekert}
Volker Diekert.
\newblock Makanin's {A}lgorithm.
\newblock In M.~Lothaire, editor, {\em Algebraic Combinatorics on Words},
  volume~90 of {\em Encyclopedia of Mathematics and its Applications},
  chapter~12, pages 387--442. Cambridge University Press, 2002.

\bibitem{Vijay-length}
Vijay Ganesh, Mia Minnes, Armando Solar{-}Lezama, and Martin~C. Rinard.
\newblock Word equations with length constraints: What's decidable?
\newblock In {\em Hardware and Software: Verification and Testing - 8th
  International Haifa Verification Conference, {HVC} 2012, Haifa, Israel,
  November 6-8, 2012. Revised Selected Papers}, pages 209--226. Springer, 2012.

\bibitem{Haase14}
Christoph Haase.
\newblock Subclasses of presburger arithmetic and the weak {EXP} hierarchy.
\newblock In {\em Joint Meeting of the Twenty-Third {EACSL} Annual Conference
  on Computer Science Logic {(CSL)} and the Twenty-Ninth Annual {ACM/IEEE}
  Symposium on Logic in Computer Science (LICS), {CSL-LICS} '14, Vienna,
  Austria, July 14 - 18, 2014}, pages 47:1--47:10, 2014.

\bibitem{philipp-survey}
Hossein Hojjat, Philipp R{\"{u}}mmer, and Ali Shamakhi.
\newblock On strings in software model checking.
\newblock In {\em Programming Languages and Systems - 17th Asian Symposium,
  {APLAS} 2019, Nusa Dua, Bali, Indonesia, December 1-4, 2019, Proceedings},
  pages 19--30, 2019.

\bibitem{J17}
Artur Jez.
\newblock Word equations in linear space.
\newblock {\em CoRR}, abs/1702.00736, 2017.

\bibitem{cvc4}
Tianyi Liang, Andrew Reynolds, Cesare Tinelli, Clark Barrett, and Morgan
  Deters.
\newblock A {DPLL(T)} theory solver for a theory of strings and regular
  expressions.
\newblock In {\em Computer Aided Verification - 26th International Conference,
  {CAV} 2014}, pages 646--662. Springer, 2014.

\bibitem{Libkin03}
Leonid Libkin.
\newblock Variable independence for first-order definable constraints.
\newblock {\em {ACM} Trans. Comput. Log.}, 4(4):431--451, 2003.

\bibitem{LB16}
Anthony~W. Lin and Pablo Barcel\'{o}.
\newblock String solving with word equations and transducers: Towards a logic
  for analysing mutation {XSS}.
\newblock In {\em Proceedings of the 43rd Annual ACM SIGPLAN-SIGACT Symposium
  on Principles of Programming Languages}, POPL '16, pages 123--136. Springer,
  2016.

\bibitem{DBLP:conf/atva/LinM18}
Anthony~W. Lin and Rupak Majumdar.
\newblock Quadratic word equations with length constraints, counter systems,
  and presburger arithmetic with divisibility.
\newblock In Shuvendu~K. Lahiri and Chao Wang, editors, {\em Automated
  Technology for Verification and Analysis - 16th International Symposium,
  {ATVA} 2018, Los Angeles, CA, USA, October 7-10, 2018, Proceedings}, volume
  11138 of {\em Lecture Notes in Computer Science}, pages 352--369. Springer,
  2018.

\bibitem{P91}
Lo{\"i}c Pottier.
\newblock Minimal solutions of linear diophantine systems : bounds and
  algorithms.
\newblock In Ronald~V. Book, editor, {\em Rewriting Techniques and
  Applications}, pages 162--173, Berlin, Heidelberg, 1991. Springer Berlin
  Heidelberg.

\bibitem{Berkeley-JavaScript}
Prateek Saxena, Devdatta Akhawe, Steve Hanna, Feng Mao, Stephen McCamant, and
  Dawn Song.
\newblock A symbolic execution framework for {JavaScript}.
\newblock In {\em 31st {IEEE} Symposium on Security and Privacy, S{\&}P 2010,
  16-19 May 2010, Berleley/Oakland, California, {USA}}, pages 513--528. {IEEE},
  2010.

\bibitem{S3}
Minh{-}Thai Trinh, Duc{-}Hiep Chu, and Joxan Jaffar.
\newblock {S3:} {A} symbolic string solver for vulnerability detection in web
  applications.
\newblock In {\em Proceedings of the 2014 {ACM} {SIGSAC} Conference on Computer
  and Communications Security, {CCS} 2014}, pages 1232--1243. {ACM}, 2014.

\bibitem{monadic-decomposition}
Margus Veanes, Nikolaj Bj{\o}rner, Lev Nachmanson, and Sergey Bereg.
\newblock Monadic decomposition.
\newblock {\em J. {ACM}}, 64(2):14:1--14:28, 2017.

\bibitem{Weis97}
Volker Weispfenning.
\newblock Complexity and uniformity of elimination in presburger arithmetic.
\newblock In {\em Proceedings of the 1997 International Symposium on Symbolic
  and Algebraic Computation, {ISSAC} '97, Maui, Hawaii, USA, July 21-23, 1997},
  pages 48--53, 1997.

\end{thebibliography}

\appendix

\section{Infinite Solutions of Presburger Formulas}
\label{sec:inf-sols}

For a given $\PresQF$ formula $\pres$, we show that there is a bound $\bound$ exponential in the size of $\pres$ such that if
$\ap{\pres}{\varx_1, \ldots, \varx_\numof}$
holds for some value of $\varx_\idxi$ greater than $\bound$ (for some $\idxi$), then there are infinitely many satisfying assignments.
This fact is quite standard, but we explicate it here for our particular definition of Presburger formulas.

Given $\pres$ we replace all terms
$\consta \varx \eqdiv{\denom} \constb \vary$
with
$\consta \varx = \varz + \denom \varx'
 \land
 \constb \vary = \varz + \denom \vary'$
for fresh variables $\varz$, $\varx$, and $\vary$, and
all terms
$\varx \eqdiv{\denom} \constc$
with
$\varx = \constc + \denom\varx'$
for some fresh variable $\varx'$.
This leaves us with only equality and inequality constraints in the formula.
We can replace inequalities with equalities via the introduction of a linear number of slack variables.

Next, observe that if we convert the formula to disjunctive normal form, we have a finite union of conjunctions of linear equalities.
Now, we rephrase a Proposition from Chistikov and Haase~\cite{CH16} -- which follows from Pottier~\cite{P91} -- that gives bounds on the solutions to linear equalities.

We first describe some notation.
Given finite sets of vectors
$\baseset, \periodset \subset \nats^\numof$
let
\[
    \semilin{\baseset}{\periodset} =
    \setcomp{
        \veca +
        \idxi_1 \vecp_1 +
        \cdots +
        \idxi_\varnumof \vecp_\varnumof
    }{
        \veca \in \baseset,
        \vecp_1, \ldots, \vecp_\varnumof \in \periodset,
        \idxi_1, \ldots, \idxi_\varnumof \in \nats
    } \ .
\]
For a vector
$\veca = \tup{\consta_1, \ldots, \consta_\numof}$
We write $\maxiof{\veca}$ to denote the largest $\consta_\idxi$.
For a finite set
$\baseset \subseteq \nats^\numof$
we write $\maxiof{\baseset}$ to denote the largest value of $\maxiof{\veca}$ for all
$\veca \in \baseset$.
Finally, given a conjunction of linear equalities (or a Presburger formula) $\varpres$ with $\numof$ variables, we write
\[
    \solsof{\varpres} =
    \setcomp{
        \tup{\consta_1, \ldots, \consta_\numof}
        \in \nats^\numof
    }{
        \ap{\varpres}{\consta_1, \ldots, \consta_\numof}
        \text{ holds}
    } \ .
\]

\begin{proposition}[\cite{CH16}]
    Given a conjunction of $\numof$ linear equalities $\varpres$ over $\varnumof$ variables such that $\consta$ is the largest constant in $\varpres$ then
    $\solsof{\varpres} = \semilin{\baseset}{\periodset}$
    for some
    $\baseset, \periodset \in \nats^\varnumof$
    where
    \begin{enumerate}
    \item
        $\maxiof{\baseset} \leq
         ((\numof + 2)\consta + 1)^\varnumof$, and
    \item
        $\maxiof{\periodset} \leq
         (\numof \consta + 1)^\varnumof$.
    \end{enumerate}
\end{proposition}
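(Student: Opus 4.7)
The plan is to obtain the semi-linear representation from the structure theorem for solution sets of linear Diophantine systems, and then bound the sizes of base and period vectors via Pottier's result on minimal solutions.

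First, I would rewrite the conjunction $\varpres$ as a single matrix equation $M \vecx = \vecb$, where $M \in \ints^{\numof \times \varnumof}$ and $\vecb \in \ints^{\numof}$, with every entry of $M$ and $\vecb$ bounded in absolute value by $\consta$. The set of non-negative integer solutions of such a system is well-known to form a \emph{linear set}: there is a finite set $\baseset$ of $\leq$-minimal solutions of the inhomogeneous system $M\vecx = \vecb$ and a finite set $\periodset$ of $\leq$-minimal non-trivial solutions of the homogeneous system $M\vecx = \vec{0}$ (the Hilbert basis) such that $\solsof{\varpres} = \semilin{\baseset}{\periodset}$. This decomposition is classical and follows from Dickson's lemma together with the observation that any solution of $M\vecx = \vecb$ differs from a minimal such solution by a solution of $M\vecx = \vec{0}$, which in turn decomposes into $\periodset$-pieces.

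Second, I would invoke Pottier's theorem to bound the entries of the vectors in $\baseset$ and $\periodset$. In the inhomogeneous case, the vector $\vecb$ contributes one extra column to the augmented system $[M \mid \vecb]$; each row then has at most $\numof + 1$ entries bounded by $\consta$, and the standard form of Pottier's bound gives an $\ell_\infty$-bound of the shape $(1 + \text{(row-sum bound)})^\varnumof$ on every minimal solution. Pushing $\numof + 1$ up to $\numof + 2$ to absorb the ``$+1$'' inside the base gives exactly $((\numof + 2)\consta + 1)^\varnumof$. For the homogeneous system there is no extra column, so the bound tightens to $(\numof \consta + 1)^\varnumof$, which is the claimed bound on $\maxiof{\periodset}$.

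The main obstacle, as I see it, is the bookkeeping of constants rather than any deep mathematics: Pottier's theorem comes in several cosmetically different formulations, and one must be careful in translating between ``max entry of $M$'', ``max row-sum of $M$'', and ``max entry of the augmented matrix $[M \mid \vecb]$'' to land exactly on the constants $\numof + 2$ and $\numof$ claimed in the statement. Once the right variant of Pottier's bound is selected, verifying the two inequalities is a direct calculation, and the semi-linear decomposition of $\solsof{\varpres}$ follows with no additional work.
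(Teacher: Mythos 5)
The first thing to say is that the paper does not prove this proposition at all: it is imported (up to rephrasing) from Chistikov and Haase \cite{CH16}, with only the remark that it ``follows from Pottier~\cite{P91}''. Your overall plan---write $\varpres$ as $M\vecx=\vecb$, take $\baseset$ to be the $\leq$-minimal solutions of the inhomogeneous system and $\periodset$ the Hilbert basis of $M\vecx=\vec{0}$, conclude $\solsof{\varpres}=\semilin{\baseset}{\periodset}$ by Dickson's lemma, and then bound $\maxiof{\baseset}$ and $\maxiof{\periodset}$ via Pottier---is exactly the route the paper gestures at, and the qualitative half of your argument (the semi-linear decomposition itself) is correct and standard.

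The gap is in the quantitative half, which is the entire content of the proposition. You assert that each row of the augmented matrix $[M\mid\vecb]$ ``has at most $\numof+1$ entries bounded by $\consta$''. A row of $[M\mid\vecb]$ has one entry per variable plus one for the right-hand side, i.e.\ $\varnumof+1$ entries; $\numof$ is the number of \emph{rows}. Pottier's bound is of the form $(1+\|A\|_{1,\infty})^{r}$ with $\|A\|_{1,\infty}$ the maximum row sum and $r$ the rank, so the honest computation yields a base of roughly $(\varnumof+2)\consta+1$ and an exponent of at most $\min(\numof,\varnumof)$---something of the shape $((\varnumof+2)\consta+1)^{\numof}$, not $((\numof+2)\consta+1)^{\varnumof}$. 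These two expressions are incomparable in general (try $\numof=1,\varnumof$ large, and vice versa), so ``pushing $\numof+1$ up to $\numof+2$'' does not bridge them; you flag the bookkeeping as the main obstacle and then do not actually carry it out, so neither inequality is established as written. (It is worth checking the statement against \cite{CH16} itself: the roles of the number of equations and the number of variables appear to be interchanged in the transcription here relative to the usual formulation, and a correct Pottier-based derivation lands on the usual formulation. The swap is harmless for the paper's application, since the bound is only used in the symmetric form $2^{\polyf(\constd)\numof\varnumof+3}$, but a proof of the proposition as literally stated would need a different argument.)
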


Since constants are encoded in binary, the largest constant in the formula derived from $\pres$ is exponential in the size $\constd$ of $\pres$.
After the expansion of divisibility constraints, the number of bits needed to encode the largest constant will be bound by
$\ap{\polyf}{\constd}$
for some polynomial $\polyf$.
After introducing fresh variables to remove divisibility constraints, and adding slack variables to remove inequalities, the number of variables $\varnumof$ is polynomially related to the size of $\pres$.
Similarly, the number of clauses $\numof$ in any disjunct in the disjunctive normal form is also polynomially related to the size of $\pres$.
Thus,
$\maxiof{\baseset} \leq
 ((\numof + 2)2^{\ap{\polyf}{\constd}} + 1)^\varnumof \leq
 2^{\ap{\polyf}{\constd}\numof\varnumof + 3} =
 \bound$.
This is exponential in the size of $\pres$.

Now, assume we have some
$\veca =
 \tup{\consta_1, \ldots, \consta_\varnumof}
 \in
 \solsof{\pres}$
such that for some $\idxi$ we have
$\consta_\idxi > \bound$.
Let $\veca$ satisfy disjunct $\varpres$ of the transformation of $\pres$.
We have
$\veca \in \semilin{\baseset}{\periodset}$
for some $\baseset$ and $\periodset$.
That
$\consta_\idxi > \bound$
implies $\periodset$ is non-empty as $\veca$ cannot be contained in $\baseset$.
Moreover, there must be some
$\idxi_1 \vecp_1 +
 \cdots +
 \idxi_\varnumof \vecp_\varnumof$
that is non-zero in all components $\idxi$ such that $\consta_\idxi > \bound$ with
$\veca = \vecb + \idxi_1 \vecp_1 + \cdots + \idxi_\varnumof \vecp_\varnumof$
for some
$\vecb \in \baseset$.
By the definition of
$\semilin{\baseset}{\periodset}$
we know that
$\vecb + \idxj \brac{
     \idxi_1 \vecp_1 +
     \cdots +
     \idxi_\varnumof \vecp_\varnumof
 }
 \in
 \solsof{\pres}$
for all $\idxj$.
Thus, there are infinitely many solutions.

\section{Completeness of Variadic Decomposition}
\label{sec:variadic-completeness}

We prove completeness of the claim that there is an exponential bound $\bound$ such that $\ap{\pres}{\vecx, \vecy}$ is variadic decomposable on $\vecx$ iff for all $\regmap$ we have
\[
    \forall \vecx_1, \vecx_2 \reggeq{\regmap} \bound\ .\ %
        \forall \vecy\ .\ %
            \brac{\begin{array}{c}
                \ap{\isreg{\regmap}}{\vecx_1, \vecx_2} \\
                \land \\
                \ap{\samediv}{\vecx_1, \vecx_2, \vecy}
            \end{array}}
            \Rightarrow
            \brac{\begin{array}{c}
                \ap{\pres}{\vecx_1, \vecy} \\
                \iff \\
                \ap{\pres}{\vecx_2, \vecy}
            \end{array}}
    \ .
\]
We first prove that such a $\bound$ exists.
Then we prove it is exponential.

\subsubsection{Existence of a Bound}

Assume that $\pres$ is variadic decomposable on $\vecx$.
We show that Equation~\ref{eq:decomptest} holds for each $\regmap$.
During this section we will also show that $\bound$ exists.

We introduce a number of auxiliary variables
$\auxfplus{\linearf}$
and
$\auxfminus{\linearf}$
to track the value of each
$\linearf \in \xfs$.
Using these, we aim to prove that if the size of
$\ap{\linearf}{\vecx}$
is larger than $\bound$, then we can produce arbitrarily large values of
$\ap{\linearf}{\vecx}$.
Let
$\vecaux = \brac{\auxfplus{\linearf}, \auxfminus{\linearf}}_{\linearf \in \xfs}$.
We define
\[
    \ap{\auxeq}{\vecx, \vecaux} =
        \bigwedge\limits_{\linearf \in \xfs}
            \brac{
                \ap{\linearf}{\vecx} = \auxfplus{\linearf}
                \land
                \auxfminus{\linearf} = 0
            }
            \lor
            \brac{
                \ap{\linearf}{\vecx} = -\auxfminus{\linearf}
                \land
                \auxfplus{\linearf} = 0
            }\ .
\]

Towards a contradiction, assume there is some $\regmap$ such that Equation~\ref{eq:decomptest} does not hold.
That is, there are
$\vecc_1, \vecc_2 \reggeq{\regmap} \bound$
and some $\vecd$ such that
$\ap{\isreg{\regmap}}{\vecc_1, \vecc_2}$
and
$\ap{\samediv}{\vecc_1, \vecc_2, \vecd}$
and
$\ap{\pres}{\vecc_1, \vecd}
 \land
 \neg \ap{\pres}{\vecc_2, \vecd}$.
This implies the existence of $\veceaux^1$ and $\veceaux^2$ such that
\[
    \ap{\isreg{\regmap}}{\vecc_1, \vecc_2}
    \land
    \ap{\samediv}{\vecc_1, \vecc_2, \vecd}
    \land
    \ap{\auxeq}{\vecc_1, \veceaux^1}
    \land
    \ap{\auxeq}{\vecc_2, \veceaux^2}
    \land
    \ap{\pres}{\vecc_1, \vecd}
    \land
    \neg \ap{\pres}{\vecc_2, \vecd} \ .
\]
Since $\pres$ is variadic decomposable, it is equivalent to some formula
\[
    \bigvee\limits_{\idxj}
        \ap{\monpres_\idxj}{\vecx}
        \land
        \ap{\varpres_\idxj}{\vecy} \ .
\]
Observe also that the negation of $\pres$ is thus also equivalent to some decomposed formula
\[
    \bigvee\limits_{\idxj}
        \ap{\monpres'_\idxj}{\vecx}
        \land
        \ap{\varpres'_\idxj}{\vecy} \ .
\]
Furthermore, for any assignment $\vecd'$ to $\vecy$ there is a polynomially encodable $\vecd''$ (in the size of $\pres$) such that
$\ap{\samediv}{\vecx_1, \vecx_2, \vecd'}$
iff
$\ap{\samediv}{\vecx_1, \vecx_2, \vecd''}$.
This is because $\samediv$ encodes divisibility constraints only.

Hence, we can replace $\pres$ with its decomposition and $\vecd$ with a polynomially encodable $\vecd'$ and conclude that there is some formula
\[
    \ap{\isreg{\regmap}}{\vecx_1, \vecx_2}
    \land
    \ap{\samediv}{\vecx_1, \vecx_2, \vecd'}
    \land
    \ap{\auxeq}{\vecx_1, \vecaux^1}
    \land
    \ap{\auxeq}{\vecx_2, \vecaux^2}
    \land
    \ap{\monpres_\idxj}{\vecx_1}
    \land
    \ap{\monpres'_{\idxj'}}{\vecx_2}
\]
that also holds.
Due to the way we constructed the formula, there are only finitely many such formulas that we may consider.

Let $\bound$ be a bound be such that for all such formulas, the existence of a solution
$\tup{\vecc_1, \vecc_2, \veceaux^1, \veceaux^2}$
with some components larger than $\bound$ implies the existence of an infinite number of solutions.
In particular, we may assume that the solutions are growing in all components above $\bound$.
The existence of such a bound is argued in
\shortlong{%
    the full version
}{%
    Appendix~\ref{sec:inf-sols}
}
for a particular formula.
Here we are applying this argument to all formulas of the form above that can be constructed from the decomposition of $\pres$ (and its negation).
Note, these solutions may also be growing on some components below $\bound$.

In the simplest case, suppose $\vecc_1$ does not exceed $\bound$ on any component.
In this case $\regub$ is empty and thus
$\ap{\linearf}{\vecc_1} = \ap{\linearf}{\vecc_2}$
for all
$\linearf \in \xfs$.
Together with
$\ap{\samediv}{\vecc_1, \vecc_2, \vecd'}$
we obtain a contradiction against
$\ap{\pres}{\vecc_1, \vecd}
 \land
 \neg \ap{\pres}{\vecc_2, \vecd}$.

Now suppose some component of $\vecc_1$ exceeds the bound.
This implies the same component of $\vecc_2$ also exceeds the bound
(this is implied by $\reggeq{\regmap}$ as all components below the bound have
$\linearf \in \regeq$).
In this case, there are infinitely many
$\tup{\vecc'_1, \vecc'_2, \veceaux', \veceaux''}$
such that
\[
    \ap{\isreg{\regmap}}{\vecc'_1, \vecc'_2}
    \land
    \ap{\samediv}{\vecc'_1, \vecc'_2, \vecd'}
    \land
    \ap{\auxeq}{\vecc'_1, \veceaux'}
    \land
    \ap{\auxeq}{\vecc'_2, \veceaux''}
    \land
    \ap{\monpres_\idxj}{\vecc'_1}
    \land
    \ap{\monpres'_{\idxj'}}{\vecc'_2}
\]
holds.
That is, there are infinitely many $\vecc'_1$ that satisfy $\pres$ and infinitely many $\vecc'_2$ that do not, for a given $\vecd$.
Moreover, these solutions are growing in all components above $\bound$ (and possibly others).

Consider the DNF of $\pres$.
There is some disjunct that is satisfied by an infinite number of the $\vecc'_1$ above (together with $\vecd$).
Similarly, all $\vecc'_2$ do not satisfy the disjunct.
Thus, we can assume we can pick two elements
$(\vecc^l_1, \vecc^l_2)$
and
$(\vecc''_1, \vecc''_2)$
from this sequence such that for all growing components $\linearf$ we have
$\abs{\ap{\linearf}{\vecc^l_1}} < \abs{\ap{\linearf}{\vecc''_2}}$.
Fix such a disjunct and $\vecc^l_1$, $\vecc''_1$, and $\vecc''_2$ and also $\vecd$.
Note, we do not need $\vecc^l_2$ in the following proof.
The important property is that $\vecc^l_1, \vecd$ is an assignment satisfying the disjunct where $\vecc^l_1$ is smaller than $\vecc''_2$.

By definition, the satisfied disjunct is a conjunction of divisibility constraints and linear inequalities
\[
    \ap{\linearf}{\vecx} + \ap{\linearg}{\vecy} \geq \constb \ .
\]

First, consider the possibility that there is some $\linearf$ such that the value of the
$\ap{\linearf}{\vecx}$
is growing \emph{negatively}.
This case cannot occur since
$\ap{\linearg}{\vecd}$
is fixed, and hence such a growing sequence must eventually fail to satisfy the disjunct, contradicting our assumptions.

Now, there are two cases.
Note, these depend on the original assignment $\vecc_1$ and not the grown assignments.
\begin{itemize}
\item
    If
    $\abs{\ap{\linearf}{\vecc_1}} < \bound$
    then from
    $\ap{\isreg{\regmap}}{\vecc''_1, \vecc''_2}$
    we have
    $\ap{\linearf}{\vecc''_1} = \ap{\linearf}{\vecc''_2}$
    and since
    $\ap{\linearf}{\vecc''_1} + \ap{\linearg}{\vecd} \geq \constb$.
    we have
    $\ap{\linearf}{\vecc''_2} + \ap{\linearg}{\vecd} \geq \constb$.
    This remains true even if the $\linearf$ component is growing.

\item
    If
    $\abs{\ap{\linearf}{\vecc_1}} \geq \bound$
    then the $\linearf$ component must be growing, and hence positive.
    In this case, since
    $\ap{\linearf}{\vecc^l_1} + \ap{\linearg}{\vecd} \geq \constb$
    we must have
    $\ap{\linearf}{\vecc''_2} + \ap{\linearg}{\vecd} \geq \constb$.
\end{itemize}

Thus, $\vecc''_2$ and $\vecd$ satisfy all inequalities in the disjunct.
From $\samediv$ we know they satisfy the same divisibility constraints.
Consequently $\vecc''_2$ and $\vecd$ satisfy $\pres$, which is a contradiction.
This proves that all Equations~\ref{eq:decomptest} must be satisfied.

\subsubsection{The Bound is Exponential}
\label{sec:variadic-exp-bound}

Again, we introduce auxiliary variables
$\auxfplus{\linearf}$
and
$\auxfminus{\linearf}$
to track the value of each
$\linearf \in \xfs$.
For each
$\regmap = \tup{\regub, \regeq}$
consider the equation
\[
    \ap{\isreg{\regmap}}{\vecx_1, \vecx_2}
    \land
    \ap{\samediv}{\vecx_1, \vecx_2, \vecy}
    \land
    \ap{\pres}{\vecx_1, \vecy}
    \land
    \neg \ap{\pres}{\vecx_2, \vecy}
    \land
    \ap{\auxeq}{\vecx_1, \vecaux^1}
    \land
    \ap{\auxeq}{\vecx_2, \vecaux^2} \ .
\]
We know (see
\shortlong{%
    the full version)
}{%
    Appendix~\ref{sec:inf-sols})
}
that there is an exponential bound
$\bound_\regmap$
such that if $\regub$ is empty, there is a solution with all components less than $\bound_\regmap$.
Otherwise, if $\regub$ is non-empty and there is a solution
$\tup{\vecc_1, \vecc_2, \vecd}$
with
$\vecc_1, \vecc_2 \reggeq{\regmap} \bound_\regmap$
then there are infinitely many solutions, growing in all components above $\bound_\regmap$.
Let
$\regmap' = \tup{\regub', \regeq'}$
be the partition between growing components and stable components in these infinite solutions.
Since these solutions are growing in $\regub'$, there must be a solution to
\[
    \vecx_1, \vecx_2 \reggeq{\regmap'} \bound
    \land
    \ap{\isreg{\regmap'}}{\vecx_1, \vecx_2}
    \land
    \ap{\samediv}{\vecx_1, \vecx_2, \vecy}
    \land
    \ap{\pres}{\vecx_1, \vecy}
    \land
    \neg \ap{\pres}{\vecx_2, \vecy} \ .
\]

Let $\bound'$ be the largest $\bound_\regmap$.
From the above, it follows that a solution with bound $\bound$ implies a solution with $\bound'$ and vice-versa.
Hence, the bound $\bound$ can be limited to be at most exponential.

\end{document}